\newcommand\blfootnote[1]{%
	\begingroup
	\renewcommand\thefootnote{}\footnote{#1}%
	\addtocounter{footnote}{-1}%
	\endgroup
}
\begin{document}
\title[Non-Backtracking Centrality Based Random Walk on Networks]{Non-Backtracking Centrality Based Random Walk on Networks}
\author{Yuan~Lin\thanks{}}
\author{Zhongzhi~Zhang}\email{zhangzz@fudan.edu.cn}
\affiliation{Shanghai Key Laboratory of Intelligent Information
	Processing, School of Computer Science, Fudan University, Shanghai 200433, China}
\shortauthors{Y. Lin and Z. Zhang}

\keywords{Random walk, complex networks, hitting time, non-backtracking centrality, spectral graph theory}

\begin{abstract}
Random walks are a fundamental tool for analyzing realistic complex networked systems and implementing randomized algorithms to solve diverse problems such as searching and sampling. For many real applications, their actual effect and convenience depend on the properties (e.g. stationary distribution and hitting time) of random walks, with biased random walks often outperforming traditional unbiased random walks (TURW). In this paper, we present a new class of biased random walks, non-backtracking centrality based random walks (NBCRW) on a network, where the walker prefers to jump to neighbors with high non-backtracking centrality that has some advantages over eigenvector centrality. We study some properties of the non-backtracking matrix of a network, on the basis of which we propose a theoretical framework for fast computation of the transition probabilities, stationary distribution, and hitting times for NBCRW on the network. Within the paradigm, we study NBCRW on some model and real networks and compare the results with those corresponding to TURW and maximal entropy random walks (MERW), with the latter being biased random walks based on eigenvector centrality. We show that the behaviors of stationary distribution and hitting times for NBCRW widely differ from those associated with TURW and MERW, especially for heterogeneous networks.
\end{abstract}

\maketitle

\section{Introduction}

As a fundamental and powerful tool, random walks have found a wide range of applications in computer science and engineering. For example, in the area of communication and information networks, random walks can not only model and describe information delivery~\cite{ChBa11} and data gathering~\cite{ZhYaTiGaWaXi15, LeKw16}, but also quantify and predict the throughput~\cite{ElMaPr06, LiJiNiKa12}, latency performance~\cite{ChBa11}, transition~\cite{LiZh13IEEE} and search costs~\cite{BeQuBa09, LiLiWaCh09}. Other related applications of random walks in computer science include community detection~\cite{PoLa05}, recommendation system~\cite{FoPiReSa07}, computer vision~\cite{GoHuRa09}, image segmentation~\cite{Gr06}, sampling networks~\cite{RiTo10,RiWaMuTo12}, to name a few. The statistical properties of random walks play an important role in their applications, since they not only characterize the behavior of random walks themselves, but also capture the performance metrics of different applications. For example, stationary probability of stationary distribution can measure the node importance~\cite{BrPa98} of a network, as well as the visual saliency at a location~\cite{YuZhTiTa14}, while hitting time can serve as search performance gauge~\cite{BeQuBa09}. Thus, the properties of random walks have a strong impact on, even determine to a large extent, the effects of their applications.

\blfootnote{*Currently at: Institute for Theoretical Computer Science, School of Information Management and Engineering, Shanghai University of Finance and Economics, Shanghai 200433, China.}




Among various random walks, the traditional unbiased random walk (TURW) is probably the simplest one, where the transition probability from the current location to any neighbor at next time step is uniform. Nevertheless, a vast majority of real-life networks are heterogeneous~\cite{Ne03}, implying that the importance or role of different nodes are also distinct. Thus, random walks in realistic heterogeneous networks should be biased~\cite{Be09,GjKuBuMa10}, with transition probability to an important neighbor higher than that of an ordinary neighbor. A lot of works show that in comparison with TURW, biased random walks are superior in some concrete applications, e.g., network search~\cite{Be09,IkKuYa09} sampling~\cite{MaBe11}. A typical biased random walk is maximal entropy random walk (MERW)~\cite{Pa64}, which has received considerable attention~\cite{BuDuLuWa09,GoLa08,PeZh14,LiZh14}. Entropy of random walks quantifies the randomness of trajectories and can measure mobility of random walker~\cite{KaGrTh13}. MERW displays some remarkable properties different from those of TURW, e.g. small relaxation time~\cite{OcBu12}, localization of stationary distribution~\cite{BuDuLuWa09}. In the past years, MERW has been applied to several aspects, such as link prediction~\cite{LiYuLi11}, visual saliency~\cite{YuZhTiTa14} and digital image forensics~\cite{KoHu16}, and produced more desirable effects. 

MERW is in fact a biased random walk with transition biasing towards neighboring nodes with high eigenvector centrality~\cite{Bo72}, i.e. principal eigenvector of adjacent matrix. However, a recent research~\cite{MaZhNe14} pointed out that standard centrality undergoes a localization transition in heterogeneous networks, which leads to most of weight concentrating around the hub node and its vicinity. Thus, as a common measure of node importance, the standard eigenvector centrality fails to discriminate those nodes with small weight. As a remedy, an alternative centrality measure, non-backtracking centrality, was proposed~\cite{MaZhNe14}, which reserves the advantage of standard centrality but avoids its deficiency. This new centrality measure is based on non-backtracking matrix~\cite{Ha89, KrMoMoNeSlZdZh13}, which has been successfully applied to many aspects, such as community detection~\cite{KrMoMoNeSlZdZh13}, percolation~\cite{KaNeZd14,LiChZh17}, epidemic spreading~\cite{ShScMo15} and identifying influential nodes ~\cite{MoMa15}. Since the node properties, based on which the walker has preference to jump towards different nodes, play a central role in determining the behavior of biased random walks, an interesting question arises naturally: How does a random walk behave if non-backtracking centrality is incorporated into its transition probabilities? 


In this paper, we design a new biased random walk, Non-Backtracking Centrality based Random Walk (NBCRW), with the transition probabilities dependent on the non-backtracking centrality. We present a framework for computing quickly transition probabilities, stationary distribution, and hitting times of NBCRW, and provide analytical expressions for stationary distribution and hitting times. Within this framework, we study NBCRW on some synthetic and real networks, and compare their results with those with respect to TURW and MERW. We show that the behaviors of NBCRW differ greatly from those of TURW and MERW, in particular for heterogeneous networks.



The main contributions of this paper are summarized as follows.

\begin{itemize}

\item We propose a novel type of biased random walks, non-backtracking centrality based random walks (NBCRW), in which the transition probability is proportional to the non-backtracking centrality.

\item We develop a theoretical framework for efficiently computing transition probabilities of NBCRW as well as its properties, including stationary distribution and hitting times. We derive an analytical expression of the stationary distribution of NBCRW in terms of the leading eigenvalue of non-backtracking matrix and non-backtracking centrality. We also determine hitting times for NBCRW, including hitting time from an arbitrary node to another one, partial mean hitting time to a given target, and global mean hitting time to a uniformly selected node.

\item Within the established general framework, we study analytically or numerically NBCRW in model and realistic networks, and compare the results with those corresponding to TURW and MERW. We show that the stationary distribution and hitting times behave differently from those of TURW and MERW.

\end{itemize}

The remainder of this paper is organized as follows. Section~\ref{Sec2} presents a brief introduction to networks and an overview of TURW and MERW on networks. Section~\ref{Sec3} is devoted to the formulation of NBCRW. Section~\ref{Sec4} gives the experiment results and comparison between NBCRW, TURW and MERW in model and real-life networks. Section~\ref{SecS} reports the exact analytical results of stationary distributions and hitting times for NBCRW, TURW and MERW in a class of rose graphs. Section~\ref{Sec5} concludes the paper.


\section{Preliminaries}~\label{Sec2}


In this section, we introduce some useful concepts for graphs and discrete-time random walks on graphs.

\subsection{Concepts for Graphs and Random Walks}

Let $\mathcal{G}(\mathcal{V}, \mathcal{E})$ be a finite connected undirected network (graph) of $N$ nodes and $E$ edges,  with node set $\mathcal{V}=\{1, 2, \cdots, N\}$ and edge set $\mathcal{E}=\{(i, j)|i, j\in \mathcal{V}\}$. The connectivity of nodes is defined by the adjacency matrix ${\bf A}=(a_{ij})_{N \times N}$, in which the element $a_{ij}=1$ if $(i, j)\in \mathcal{E}$, and $a_{ij}=0$ otherwise. Let $\mathcal{N}_i$ denote the set of neighbors of node $i$. The degree of node $i$ is $d_i=|\mathcal{N}_i|=\sum_{j=1}^{N}a_{ij}$, which is $i$th nonzero entry of the diagonal degree matrix ${\bf D}={\rm diag}(d_1, d_2, \cdots, d_N)$. The Laplacian matrix of $\mathcal{G}$ is defined to be ${\bf L}={\bf D}-{\bf A}$.

For a graph $\mathcal{G}$, we can define a discrete time nearest-neighbor random walk taking place on it. Any random walk on a network $\mathcal{G}$ is in fact a Markov chain characterized by a unique stochastic matrix ${\bf P}=(p_{ij})_{N \times N}$, also called transition probability matrix, with entry $p_{ij}$ describing the transition probability from node $i$ to a neighboring node $j$. 


\begin{definition}
	For an irreducible random walk on graph $\mathcal{G}$, the stationary distribution $\pi=(\pi_1, \pi_2, \cdots, \pi_N)$ is an $N$-dimension vector satisfying $\pi{\bf P}=\pi$ and $\sum_{i=1}^N \pi_i=1 $.
	\end{definition}

The stationary probabilities of stationary distribution can be employed to rank nodes in a network~\cite{BrPa98}. 

Another fundamental quantity relevant to random walks is hitting time~\cite{CoBeTeVoKl07}.
\begin{definition}
For a random walk on graph $\mathcal{G}$, the hitting time from node $i$ to node $j$ ($j\neq i$), denoted by $T_{ij}$, stands for the expected jumping steps required for the walker starting from the source node $i$ to arrive at the target node $j$ for the first time.
\end{definition}

The hitting time is a significant indicator to measure the transition or research cost in a network~\cite{BeQuBa09}. 
Based on hitting time, we can further define some other quantities for random walks, such as partial mean hitting time and global mean hitting time.

\begin{definition}
For a random walk on graph $\mathcal{G}$, the partial mean hitting time to node $j$, denoted by $T_j$, is the average of hitting times $T_{ij}$ over all source nodes in the network:
\begin{equation}\label{C1}
T_j=\frac{1}{N-1}\sum_{i=1}^{N}T_{ij}.
\end{equation}
\end{definition}

The partial mean hitting time $T_j$ is actually mean absorbing time of an absorbing Markov chain with $j$ being the absorbing state, reflecting the absorbing efficiency of node $j$~\cite{LiJuZh12,ErGoSaSe14}. It was recently utilized to measure the importance of node $j$, and is thus called Markov centrality~\cite{WhSm03}.

\begin{definition}
For a random walk on graph $\mathcal{G}$, the global mean hitting time, denoted by $\langle T\rangle$, is the average of hitting times $T_{ij}$ over all $N(N-1)$ pair of nodes, equivalent to the mean hitting time to a uniform distributed node, which is given by
\begin{equation}\label{C2}
\langle T\rangle=\frac{1}{N(N-1)}\sum_{i=1}^{N}\sum_{j\neq i}T_{ij}=\frac{1}{N}\sum_{j=1}^{N}T_j.
\end{equation}
\end{definition}

The global mean hitting time can be applied to gauge the search efficiency of a network~\cite{FeQuYi14}.


Given a network, we can define different random walks. Below we only introduce two much studied random walks: traditional unbiased random walk (TURW) and maximal entropy random walk (MERW).


\subsection{Traditional Unbiased Random Walk}

For TURW on graph $\mathcal{G}$, the transition probability from a node $i$ to one of its neighboring nodes $j$ is identical, namely
\begin{equation}\label{B2}
p_{ij}=\frac{a_{ij}}{d_{i}}.
\end{equation}
Thus, the transition probability matrix is ${\bf P}={\bf D}^{-1}{\bf A}$, and the stationary distribution is~\cite{Lo96, AlFi99}
\begin{equation}\label{B3}
\pi^{\rm T}=(\pi_1^{\rm T}, \pi_2^{\rm T}, \cdots, \pi_N^{\rm T})=\left(\frac{d_1}{2E}, \frac{d_2}{2E}, \cdots, \frac{d_N}{2E}\right),
\end{equation}
which implies that all nodes with the same degree have identical occupation probability in the stationary state.

The hitting time for TURW on $\mathcal{G}$ can be expressed in terms of spectra of its Laplacian matrix ${\bf L}$. Let $0=\sigma_1<\sigma_2\leq \cdots \leq \sigma_N$ be the $N$ eigenvalues of ${\bf L}$, and let $\mu_1, \mu_2, \cdots, \mu_N$ be their corresponding normalized mutually orthogonal eigenvectors, where $\mu_i=(\mu_{i1}, \mu_{i2}, \cdots, \mu_{iN})^{\top}$ for each $i=1,2, \cdots, N$. Then, the hitting time $T_{ij}$,  partial mean hitting time, and global mean hitting time can be represented by~\cite{LiJuZh12}
\begin{equation}\label{B31}
T_{ij}=\sum_{z=1}^{N}d_z\sum_{k=2}^{N}\frac{1}{\sigma_k}(\mu_{ki}\mu_{kz}-\mu_{ki}\mu_{kj}-\mu_{kj}\mu_{kz}+\mu_{kj}^2),
\end{equation}
\begin{equation}\label{B32}
T_{j}=\frac{N}{N-1}\sum_{k=2}^{N}\frac{1}{\sigma_k}\left(2E\times\mu_{kj}^2-\mu_{kj}\sum_{z=1}^{N}d_z\mu_{kz}\right)
\end{equation}
and
\begin{equation}\label{B33}
\langle T\rangle=\frac{2E}{N-1}\sum_{k=2}^{N}\frac{1}{\sigma_k},
\end{equation}
respectively.

\subsection{Maximal Entropy Random Walk}

Different from the TUWR, MERW on graph $\mathcal{G}$ is a biased random walk, whose transition probability is defined based on the leading eigenvalue and eigenvector of adjacency matrix $\bf A$. Let $\lambda_1>\lambda_2\geq\cdots\geq\lambda_N$ be the $N$ real eigenvalues of $\bf A$, and $\psi_1, \psi_2, \cdots, \psi_N$ their corresponding mutually orthogonal unit eigenvectors, where $\psi_i=(\psi_{i1}, \psi_{i2}, \cdots, \psi_{iN})^\top$ for each $i=1,2, \cdots, N$. Then, the transitional probability $p_{ij}$ from node $i$ to node $j$ in MERW is defined by~\cite{Pa64,BuDuLuWa09}
\begin{equation}\label{B4}
p_{ij}=\frac{a_{ij}}{\lambda_1}\frac{\psi_{1j}}{\psi_{1i}}.
\end{equation}
Note that principal eigenvector $\psi_1$ is in fact the frequently used centrality measure~\cite{Bo72}, with the entry $\psi_{1,i}$ defining a centrality score for node $i$. In this sense, MERW can be considered as a biased random walk based on eigenvector centrality.

Equation~(\ref{B4}) guarantees that MERW maximizes the entropy of a set of trajectories with a given length and end-nodes, leading to the maximal entropy rate of such process~\cite{BuDuLuWa09}. The stationary distribution of MERW is
\begin{equation}\label{B5}
\pi^{\rm M}=(\pi_1^{\rm M}, \pi_2^{\rm M}, \cdots, \pi_N^{\rm M})=(\psi_{11}^2, \psi_{12}^2, \cdots, \psi_{1N}^2).
\end{equation}
Since in some networks, especially heterogeneous networks, the eigenvector centrality  $\psi_1$ exhibits a localization phenomenon~\cite{MaZhNe14} with the weight of centrality concentrating around one or a few nodes with high degree in the networks,  from (\ref{B5}) one can see that in these networks, the stationary distribution for MERW displays a more evident localization transition: the abrupt focusing of occupation probabilities on just a few large-degree nodes and their neighbors. 





Interestingly, for MERW on graph $\mathcal{G}$, the hitting time $T_{ij}$, partial mean hitting time $T_{j}$, and global mean hitting time $\langle T\rangle$ can be expressed in terms of the eigenvalues and eigenvectors of adjacency matrix $\bf A$~\cite{LiZh14}:
\begin{equation}\label{B51}
T_{ij}=\frac{1}{\psi_{1j}^2}\sum_{k=2}^{N}\frac{\lambda_1}{\lambda_1-\lambda_k}\left(\psi_{kj}^2-\psi_{ki}\psi_{kj}\frac{\psi_{1j}}{\psi_{1i}}\right),
\end{equation}
\begin{equation}\label{B52}
T_j=\frac{1}{\psi_{1j}^2(N-1)}\sum_{k=2}^N\frac{\lambda_1}{\lambda_1-\lambda_k}\left(N\psi_{kj}^2-\psi_{kj}\psi_{1j}\sum_{i=1}^N\frac{\psi_{ki}}{\psi_{1i}}\right),
\end{equation}
\begin{align}\label{B53}
\langle T\rangle&=\frac{1}{N(N-1)}\sum_{j=1}^N\frac{1}{\psi_{1j}^2}\nonumber\\
&\quad\sum_{k=2}^N\frac{\lambda_1}{\lambda_1-\lambda_k}\left(N\psi_{kj}^2-\psi_{kj}\psi_{1j}\sum_{i=1}^N\frac{\psi_{ki}}{\psi_{1i}}\right).
\end{align}


\section{Formulation of Non-Backtracking Centrality Based Random Walk}~\label{Sec3}

For a biased random walk on a graph $\mathcal{G}$, its behavior depends on the property of the quantity with respect to nodes, based on which the transition probability is defined. As shown in a recent paper~\cite{MaZhNe14}, the eigenvector centrality has some flaws, e.g., localization transition, which results in obvious heterogeneity in the stationary distribution of MERW. Since non-backtracking centrality can avoid the deficiency of eigenvector centrality~\cite{MaZhNe14}, as a remedy of MERW, in this section, we propose a new biased random walk based on non-backtracking centrality. To begin with, we introduce the non-backtracking centrality and study some of its properties.

\subsection{Non-Backtracking Centrality}

The non-backtracking centrality~\cite{MaZhNe14} is defined and calculated by the Hashimoto or non-backtracking matrix~\cite{Ha89,KrMoMoNeSlZdZh13}, denoted by $\bf B$ that is a $2E\times 2E$ matrix. For any undirected network $\mathcal{G}$, we can transform it to a directed graph through replacing each undirected edge $(i,j)$ by two directed ones $i\to j$ and $j \to i$. The $2E\times 2E$ non-backtracking matrix $\bf B$ of $\mathcal{G}$ describes the relation between the $2E$ different directed edges, the element $B_{i\to j, k\to l}$ of which is defined as follows:
\begin{equation}\label{B6}
B_{i\to j, k\to l}=\left\{\begin{aligned}
&1, \quad j=k \quad{\rm and} \quad i\neq l, \\
&0, \quad {\rm otherwise}. \\
\end{aligned}
\right.
\end{equation}

Since all entries of the non-backtracking matrix $\bf B$ are non-negative real numbers, by the Perron-Frobenius theorem~\cite{St09}, its leading eigenvalue is real and non-negative, and there exists a corresponding leading eigenvector, whose elements are also non-negative real numbers. Let $\kappa$ be the leading eigenvalue of $\bf B$, and let $v_{i\to j}$ be the element of the leading eigenvector corresponding to the directed edge $i\to j$. Then, $v_{i\to j}$ represents the centrality of node $j$ neglecting any contribution from node $i$. According to the leading eigenvector of $\bf B$, one can define two centrality measures of each node~\cite{KrMoMoNeSlZdZh13}, outgoing centrality and incoming centrality,  by considering the outgoing and incoming edges of the node.
\begin{definition}
For a node $i$ in network $\mathcal{G}$, its outgoing centrality is
\begin{equation}\label{B7}
x_i=\sum_{j\in\mathcal{N}_i}v_{i\to j},
\end{equation}
and its incoming centrality is
\begin{equation}\label{B71}
y_i=\sum_{j\in\mathcal{N}_i}v_{j\to i}.
\end{equation}
\end{definition}
Note that the outgoing centrality $x_i$ is actually the non-backtracking centrality~\cite{MaZhNe14}.

\begin{lemma}
For a node $i$ in network $\mathcal{G}$, its outgoing and incoming centralities obey
\begin{equation}\label{S6}
\kappa y_i=(d_i-1)x_i.
\end{equation}
\end{lemma}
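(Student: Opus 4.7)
The plan is to derive the identity directly from the eigenvalue equation $\mathbf{B}v = \kappa v$ satisfied by the leading eigenvector of the non-backtracking matrix, by expressing the incoming centrality as a sum that can be rearranged into the outgoing centrality.

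First I would spell out what the eigenvalue equation says entrywise. Using the definition in equation~(\ref{B6}), for any directed edge $i\to j$,
\begin{equation*}
\kappa\, v_{i\to j} \;=\; \sum_{k\to l} B_{i\to j,\,k\to l}\, v_{k\to l} \;=\; \sum_{l\in\mathcal{N}_j,\, l\neq i} v_{j\to l}.
\end{equation*}
This is the only analytic fact I need: the non-backtracking eigenvector equation simply says that $\kappa\, v_{i\to j}$ is the sum of $v_{j\to l}$ over all outgoing edges at $j$ except the one going back to $i$.

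Next I would start from the definition $y_i=\sum_{j\in\mathcal{N}_i} v_{j\to i}$ in~(\ref{B71}) and multiply by $\kappa$, applying the eigenvalue identity to each summand $\kappa v_{j\to i}$ (taking the role of $\kappa v_{i\to j}$ above with the edge $j\to i$):
\begin{equation*}
\kappa y_i \;=\; \sum_{j\in\mathcal{N}_i} \kappa v_{j\to i} \;=\; \sum_{j\in\mathcal{N}_i} \sum_{l\in\mathcal{N}_i,\, l\neq j} v_{i\to l}.
\end{equation*}
Swapping the order of summation, every term $v_{i\to l}$ with $l\in\mathcal{N}_i$ is counted once for each $j\in\mathcal{N}_i$ with $j\neq l$, i.e.\ exactly $d_i-1$ times. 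Hence
\begin{equation*}
\kappa y_i \;=\; (d_i-1)\sum_{l\in\mathcal{N}_i} v_{i\to l} \;=\; (d_i-1)\,x_i,
\end{equation*}
using the definition of $x_i$ in~(\ref{B7}). This closes the argument.

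I do not anticipate a real obstacle: the whole proof is a one-line eigenvector expansion followed by a double-counting rearrangement. The only point that merits care is the bookkeeping of which neighbor is excluded in the non-backtracking sum --- the exclusion $l\neq j$ in the inner sum is what produces the $d_i-1$ factor rather than $d_i$, and it is also the only feature that distinguishes this identity from the analogous eigenvector identity for the ordinary adjacency matrix.
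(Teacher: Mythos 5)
Your proof is correct and follows essentially the same route as the paper: both start from the entrywise eigenvector equation $\kappa v_{j\to i}=\sum_{l\in\mathcal{N}_i,\,l\neq j}v_{i\to l}$, sum over $j\in\mathcal{N}_i$, and observe that each $v_{i\to l}$ is counted $d_i-1$ times (the paper phrases this as subtracting the diagonal term $\sum_j v_{i\to j}=x_i$ from the full double sum $d_i x_i$, which is the same bookkeeping as your swap of summation order). No gap.
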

\begin{proof} By definition of eigenvalues and eigenvectors for matrix $\bf B$, we can establish equation
\begin{equation}\label{S2}
\kappa v_{i\to j}=\sum_{\substack{k\in\mathcal{N}_j\\k\neq i}}v_{j\to k} .
\end{equation}
Using (\ref{B7}) and (\ref{S2}), we rephrase (\ref{B71}) as
\begin{align}\label{S5}
y_i&=\frac{1}{\kappa}\sum_{j\in\mathcal{N}_i}\sum_{\substack{k\in\mathcal{N}_i\\k\neq j}}v_{i\to k}=\frac{1}{\kappa}\left(\sum_{j\in\mathcal{N}_i}\sum_{k\in\mathcal{N}_i}v_{i\to k}-\sum_{j\in\mathcal{N}_i}v_{i\to j}\right)\nonumber\\
&=\frac{1}{\kappa}\left(\sum_{j\in\mathcal{N}_i}x_i-x_i\right)=\frac{1}{\kappa}\left(d_i x_i-x_i\right),
\end{align}
which is equivalent to (\ref{S6}).
\end{proof}

If graph $\mathcal{G}$ is a tree, the leading eigenvalue $\kappa$ of its non-backtracking matrix ${\bf B}$ is zero. However, when $\mathcal{G}$ is not a tree, the leading eigenvalue $\kappa$ of ${\bf B}$ is positive, and the components of leading eigenvector  may be all non-negative. In what follows, we will consider the case when $\mathcal{G}$ are not trees.

For a network $\mathcal{G}$, computing its non-backtracking centrality involving computing the leading eigenvector of its non-backtracking matrix ${\bf B}$ of order $2E\times 2E$. If we directly compute the leading eigenvector according to definition, the time and space cost is very high. Fortunately, in practice, we can substantially reduce the consumption by executing a faster computation for $\kappa$ and non-backtracking centrality $x_i$, utilizing the Ihara determinant~\cite{Ha89,Ba92,AnFrHo07}.

\begin{lemma}\label{MandB}
For a network $\mathcal{G}$, its leading eigenvalue $\kappa$ of non-backtracking matrix $\bf B$ is equal to the leading eigenvalue of a $2N \times 2N$ matrix
\begin{equation}\label{M1}
{\bf M}=\left(\begin{array}{cc}{\bf A} & {\bf I}-{\bf D}\\{\bf I} & {\bf 0}\end{array}\right),
\end{equation}
where ${\bf I}$ is the $N\times N$ identity matrix. In addition, $x_1, x_2, \cdots, x_N$ correspond to the first $N$ elements of the leading eigenvector of matrix $\bf M$.
\end{lemma}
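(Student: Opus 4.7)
The plan is to exploit the eigenvalue equation $\kappa v_{i\to j}=\sum_{k\in\mathcal{N}_j,\,k\ne i}v_{j\to k}$ (equation (\ref{S2}) from the previous proof) together with Lemma~1 to derive a closed recurrence on the outgoing centralities $x_i$, then recognize that recurrence as an eigenvector equation for $\mathbf{M}$. The upgrade from ``$\kappa$ is an eigenvalue of $\mathbf{M}$'' to ``$\kappa$ is the \emph{leading} eigenvalue of $\mathbf{M}$'' is the delicate step and will be handled via a determinantal identity (Ihara--Bass type).

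First, I would sum the non-backtracking eigenrelation over $j\in\mathcal{N}_i$. Splitting off the forbidden term $k=i$ gives
\begin{equation*}
\kappa x_i=\sum_{j\in\mathcal{N}_i}\sum_{\substack{k\in\mathcal{N}_j\\k\ne i}}v_{j\to k}=\sum_{j\in\mathcal{N}_i}\bigl(x_j-v_{j\to i}\bigr)=\sum_{j=1}^{N}a_{ij}x_j-y_i.
\end{equation*}
Now I substitute the identity $y_i=(d_i-1)x_i/\kappa$ from Lemma~1 (assuming $\mathcal{G}$ is not a tree, so $\kappa>0$) to eliminate $y_i$, which yields
\begin{equation*}
\kappa^2 x_i-\kappa\sum_{j=1}^{N}a_{ij}x_j+(d_i-1)x_i=0.
\end{equation*}
Writing $\mathbf{x}=(x_1,\dots,x_N)^{\T}$ and $\mathbf{w}=\mathbf{x}/\kappa$, this is precisely the block system
\begin{equation*}
\mathbf{A}\mathbf{x}+(\mathbf{I}-\mathbf{D})\mathbf{w}=\kappa\mathbf{x},\qquad \mathbf{x}=\kappa\mathbf{w},
\end{equation*}
i.e.\ $\mathbf{M}(\mathbf{x},\mathbf{w})^{\T}=\kappa(\mathbf{x},\mathbf{w})^{\T}$. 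This proves that $\kappa$ is an eigenvalue of $\mathbf{M}$ and that the first $N$ components of the corresponding eigenvector are exactly $x_1,\dots,x_N$, as claimed for the eigenvector part of the statement.

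For the ``leading'' claim, the main obstacle is that the above only exhibits $\kappa$ as \emph{some} eigenvalue of $\mathbf{M}$. I would close the gap with a determinant computation. Using the Schur complement on the lower-right block of $\mathbf{M}-z\mathbf{I}_{2N}$, one obtains
\begin{equation*}
\det(z\mathbf{I}_{2N}-\mathbf{M})=\det\bigl(z^2\mathbf{I}_N-z\mathbf{A}+(\mathbf{D}-\mathbf{I})\bigr),
\end{equation*}
which, combined with the Ihara determinant formula $\det(z\mathbf{I}_{2E}-\mathbf{B})=(z^2-1)^{E-N}\det(z^2\mathbf{I}_N-z\mathbf{A}+(\mathbf{D}-\mathbf{I}))$, shows that the spectrum of $\mathbf{B}$ consists of the eigenvalues of $\mathbf{M}$ together with $\pm 1$ with multiplicity $E-N$. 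Hence every eigenvalue of $\mathbf{M}$ is an eigenvalue of $\mathbf{B}$, so no eigenvalue of $\mathbf{M}$ can exceed $\kappa$ in modulus. Since $\kappa$ is itself in the spectrum of $\mathbf{M}$ by the construction above, and for non-tree $\mathcal{G}$ we have $\kappa>1$ (so $\kappa$ is not one of the ``spurious'' $\pm1$ eigenvalues), $\kappa$ must be the leading eigenvalue of $\mathbf{M}$, completing the proof.
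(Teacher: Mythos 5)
Your proposal is correct, and its first half is essentially the paper's own argument: summing the edge-eigenrelation over $j\in\mathcal{N}_i$, eliminating $y_i$ via Lemma~1, and recognizing the resulting quadratic relation $\kappa^2 x_i-\kappa\sum_j a_{ij}x_j+(d_i-1)x_i=0$ as the block eigenvector equation $\mathbf{M}(\mathbf{x}\,|\,\mathbf{x}/\kappa)^{\top}=\kappa(\mathbf{x}\,|\,\mathbf{x}/\kappa)^{\top}$ is exactly what the paper does in deriving its equations (\ref{S7})--(\ref{Y1}). Where you genuinely diverge is on the ``leading'' claim. The paper simply asserts that its quadratic relation ``shows that $\mathbf{B}$ and $\mathbf{M}$ have the same set of real eigenvalues,'' which as written is a non sequitur: the construction only transports the single Perron eigenpair of $\mathbf{B}$ into the spectrum of $\mathbf{M}$ and says nothing about whether $\mathbf{M}$ could have a larger eigenvalue. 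Your Schur-complement computation $\det(z\mathbf{I}_{2N}-\mathbf{M})=\det\left(z^2\mathbf{I}_N-z\mathbf{A}+\mathbf{D}-\mathbf{I}\right)$, combined with the Ihara--Bass identity, closes precisely this gap by showing the spectrum of $\mathbf{M}$ is contained in that of $\mathbf{B}$, so the spectral radius of $\mathbf{M}$ cannot exceed $\kappa$; this is the rigorous version of what the paper only gestures at when it mentions ``utilizing the Ihara determinant'' before stating the lemma. One small inaccuracy: your parenthetical ``for non-tree $\mathcal{G}$ we have $\kappa>1$'' fails for a single cycle, where $\kappa=1$; but this remark is not actually needed, since the eigenvector construction already places $\kappa$ in the spectrum of $\mathbf{M}$ directly, so the conclusion is unaffected.
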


\begin{proof} Combining (\ref{S2}) and (\ref{B7}), the non-backtracking centrality $x_i$ can be rewritten as
\begin{align}\label{S7}
x_i&=\sum_{j\in\mathcal{N}_i}\frac{1}{\kappa}\sum_{\substack{k\in\mathcal{N}_j\\k\neq i}}v_{j\to k}=\frac{1}{\kappa}\left(\sum_{j\in\mathcal{N}_i}\sum_{k\in\mathcal{N}_j}v_{j\to k}-\sum_{j\in\mathcal{N}_i}v_{j\to i}\right)\nonumber\\
&=\frac{1}{\kappa}\left(\sum_{j\in\mathcal{N}_i}x_j-y_i\right)=\frac{1}{\kappa}\left(\sum_{j=1}^N a_{ij}x_j-\frac{d_i-1}{\kappa}x_i\right).
\end{align}
Recasting (\ref{S7}) in matrix notation, one obtains
\begin{equation}\label{S8}
\left({\bf A}-\frac{1}{\kappa}{\bf D}+\frac{1}{\kappa}{\bf I}\right){\bf x}=\kappa{\bf x},
\end{equation}
where ${\bf x}=(x_1, x_2, \cdots, x_N)^{\top}$ is a vector composed of the non-backtracking centralities of $N$ nodes in $\mathcal{G}$. Equation (\ref{S8}) shows that matrices ${\bf B}$ and ${\bf M}$ have the same set of real eigenvalues. Furthermore,
\begin{equation}\label{Y1}
{\bf M}{\bf z}=\kappa{\bf z}.
\end{equation}
Here ${\bf z}=({\bf x}|\frac{1}{\kappa}{\bf x})$, in which $\bf x$ represents the first $N$ elements of ${\bf z}$ and $\frac{1}{\kappa}{\bf x}$ constitutes the last $N$ elements.  Thus, the leading eigenvalues of matrix ${\bf B}$ and ${\bf M}$ are equal to each other, and the first $N$ elements of {\bf z} correspond to the non-backtracking centralities $x_1$, $x_2$, $\cdots$, $x_N$.
\end{proof}

Lemma~\ref{MandB} indicates that  the computation of the leading eigenvalue $\kappa$ for non-backtracking centrality ${\bf M}$ of order $2E$ and non-backtracking centralities ${\bf x}$ can be reduced to calculating the leading eigenvalue and eigenvector for matrix $\bf M$ of order of $2N$, smaller than the order $2E$ of matrix $\bf B$, especially for dense networks. Thus, we can compute $\kappa$ and $x_i$ very rapidly by evaluating the leading eigenvalue and eigenvector of matrix ${\bf M}$.

\subsection{Definition of Transition Matrix}

According to the bias towards properties of nodes, we can define different biased random walks. Here we propose a novel random walk, non-backtracking centrality random walk (NBCRW), which is a biased one with the transition probability having a bias towards nodes with high non-backtracking centrality.
\begin{definition}
For NBCRW on network $\mathcal{G}$, the element at row $i$ and column $j$ of transition matrix $\bf P$ is
\begin{equation}\label{B8}
p_{ij}=\frac{a_{ij}x_j}{\sum_{k=1}^N a_{ik}x_k}.
\end{equation}
\end{definition}
In other words, the transition probability for NBCRW from node $i$ to its neighbor $j$ is proportional to the non-backtracking centrality of $j$.

In order to investigate the properties of NBCRW on network $\mathcal{G}$, we propose an approach to construct a weighted network $\mathcal{W}$ from the original network $\mathcal{G}$. The weight of each edge in $\mathcal{W}$ is related to the non-backtracking centralities of both nodes connecting the edge in $\mathcal{G}$. We will present that NBCRW on network $\mathcal{G}$ is equivalent to the ordinary random walk~\cite{ZhShCh13} in the corresponding weighted network $\mathcal{W}$, with both random walks having the same properties, such as transition probability, stationary distribution, and hitting times.

\begin{definition}
For an unweighted network $\mathcal{G}(\mathcal{V},\mathcal{E})$, given its adjacency matrix $\bf A$, a diagonal matrix $\bf X$ with its $i$th diagonal entry equal to non-backtracking centrality $x_i$ of node $i$, its corresponding weighted network is defined as $\mathcal{W}(\mathcal{V},\mathcal{E})$, with the weight between nodes $i$ and $j$ given by $w_{ij}=a_{ij}x_i x_j$.
\end{definition}

Let ${\bf W}=(w_{ij})_{N\times N}$ stand for the  adjacency matrix of the weighted network $\mathcal{W}$. Different from the adjacency matrix ${\bf A}$ of binary network $\mathcal{G}$, the elements of ${\bf W}$ are not simply 0 or 1, but are the weights of all pairs of nodes. By definition, we have $\bf W=\bf X\bf A\bf X$. In a weighted network $\mathcal{W}$, the strength~\cite{BaBaPaVe04} of a node $i$ is $s_i=\sum_{k=1}^N w_{ik}=x_i\sum_{k=1}^N a_{ik}x_k$, and  the total strength of the whole network $\mathcal{W}$ is $s=\sum_{i=1}^N\sum_{j=1}^N w_{ij}$. Then, the diagonal strength matrix of $\mathcal{W}$ is defined as ${\bf S}={\rm diag}(s_1, s_2, \cdots, s_N)$, and the Laplacian matrix of $\mathcal{W}$ is defined by ${\bf L}={\bf S}-{\bf W}$.

\begin{theorem}
The transition matrix of NBCRW in an arbitrary connected network $\mathcal{G}$ is identical to the transitional matrix of ordinary random walk in the corresponding weighted network $\mathcal{W}$.
\end{theorem}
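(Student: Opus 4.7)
The plan is to verify the claim by a direct entry-wise comparison of the two transition matrices, exploiting the multiplicative structure $\mathbf{W} = \mathbf{X}\mathbf{A}\mathbf{X}$ introduced just before the statement. There is no real obstacle here; the work is essentially bookkeeping once the definitions are unfolded correctly, and the key observation is that the factor $x_i$ appearing in every weight $w_{ij}$ also appears in the strength $s_i$, so it cancels out of the normalized transition probability.

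First I would write out the $(i,j)$ entry of the transition matrix for an ordinary random walk on the weighted network $\mathcal{W}$. By the standard definition, this probability equals $w_{ij}/s_i$. Using $w_{ij} = a_{ij}x_i x_j$ and $s_i = \sum_{k=1}^N w_{ik} = x_i \sum_{k=1}^N a_{ik}x_k$, one obtains
\begin{equation*}
\frac{w_{ij}}{s_i} = \frac{a_{ij} x_i x_j}{x_i \sum_{k=1}^N a_{ik}x_k} = \frac{a_{ij} x_j}{\sum_{k=1}^N a_{ik}x_k},
\end{equation*}
which is exactly the entry $p_{ij}$ of the NBCRW transition matrix defined in~(\ref{B8}).

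To make the cancellation legitimate I would briefly justify that $x_i > 0$ for every node $i$, so that division by $x_i$ is well defined. This follows from the Perron--Frobenius argument already used for the non-backtracking matrix $\mathbf{B}$: by Lemma~\ref{MandB}, the vector $\mathbf{x}$ is the first $N$-block of a leading eigenvector of $\mathbf{M}$ associated with $\kappa>0$, and since $\mathcal{G}$ is connected and not a tree the leading eigenvector can be chosen strictly positive. Equivalently, $x_i = \sum_{j\in\mathcal{N}_i} v_{i\to j} > 0$ for every $i$ because the $v_{i\to j}$ are non-negative and some outgoing edge of $i$ must carry a positive component (otherwise connectivity of the directed cover would be violated).

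Finally, I would state the conclusion: matching the $(i,j)$ entries for all $i,j$ gives $\mathbf{P}_{\mathrm{NBCRW}}=\mathbf{S}^{-1}\mathbf{W}$, the transition matrix of the ordinary (unbiased) random walk on $\mathcal{W}$. This equivalence is the key that lets the rest of the paper transport known results for random walks on weighted graphs (stationary distribution proportional to strength, hitting-time formulas in terms of the weighted Laplacian $\mathbf{L}=\mathbf{S}-\mathbf{W}$) over to NBCRW on $\mathcal{G}$.
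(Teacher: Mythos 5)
Your proposal is correct and follows essentially the same route as the paper: compute $w_{ij}/s_i$ with $w_{ij}=a_{ij}x_ix_j$ and $s_i=x_i\sum_k a_{ik}x_k$, cancel $x_i$, and match the result with the defining formula for $p_{ij}$. Your added justification that $x_i>0$ (so the cancellation is legitimate) is a small but welcome extra beyond what the paper states explicitly.
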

\begin{proof} For the ordinary random walk in the weighted network $\mathcal{W}$, the transitional probability from node $i$ to node $j$ is
\begin{equation}
p_{ij}=\frac{w_{ij}}{s_i}=\frac{a_{ij}x_{i}x_{j}}{\sum_{k=1}^N a_{ik}x_{i}x_k}=\frac{a_{ij}x_i}{\sum_{k=1}^N a_{ik}x_k}, \nonumber
\end{equation}
which completely agrees with (\ref{B8}). Therefore, the transition matrix for NBCRW on $\mathcal{G}$ is the same as that of the ordinary random walk on $\mathcal{W}$.
\end{proof}

Since both networks $\mathcal{G}$ and $\mathcal{W}$ have the same topological structure and transition matrix, NBCRW on $\mathcal{G}$ and ordinary random walk on $\mathcal{W}$ also have identical behaviors. In the sequel, we study the properties of NBCRW on $\mathcal{G}$ directly or indirectly by considering those of ordinary random walk on $\mathcal{W}$.

We note that our proposed NBCRW is different from non-backtracking random walk~\cite{AlBeLu07,FiHo13} that is  a random process, during which  the walker never goes back along the edge it just traversed. For a general graph $\mathcal{G}$,   non-backtracking random walk is not a Markov chain on its vertex set, although it can be regarded as a Markov chain on the set of its directed edges~\cite{Ke16}, whose adjacency relation are encoded in non-backtracking matrix $\bf B$. In contrast, NBCRW on the vertex set of $\mathcal{G}$ is a biased Markov chain based on non-backtracking centrality. A main goal of this paper is to unveil the impacts of biases, especially non-backtracking centrality,  on the behaviors of biased random walks.


\subsection{Stationary Distribution}

First, we address the stationary distribution for NBCRW on $\mathcal{G}$.
\begin{theorem}
The stationary distribution for NBCRW on network $\mathcal {G}$ is $\pi^{\rm B}=(\pi_1^{\rm B}, \pi_2^{\rm B}, \cdots, \pi_N^{\rm B})$, where
\begin{equation}\label{B9}
\pi_i^{\rm B}=\left(\frac{\kappa^2-1}{\kappa}+\frac{d_i}{\kappa}\right)\frac{x_i^2}{Q},
\end{equation}
with
\begin{equation}\label{B91}
Q=\sum_{i=1}^N\left(\frac{\kappa^2-1}{\kappa}+\frac{d_i}{\kappa}\right)x_i^2
\end{equation}
being the normalized factor to guarantee $\sum_{i=1}^N\pi_i^{\rm B}=1$.
\end{theorem}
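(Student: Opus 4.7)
The plan is to leverage the equivalence established in the previous theorem: NBCRW on $\mathcal{G}$ has the same transition matrix as the ordinary random walk on the weighted network $\mathcal{W}$ with edge weights $w_{ij}=a_{ij}x_ix_j$. A standard fact about ordinary random walks on weighted undirected networks is that the stationary distribution is proportional to node strength, namely $\pi_i = s_i/s$ where $s_i=\sum_k w_{ik}$ and $s=\sum_i s_i$. I would first briefly verify this by checking the reversibility condition $s_i p_{ij} = s_j p_{ji}$, which is immediate from the symmetry $w_{ij}=w_{ji}$, so that $\pi_i \propto s_i$ is indeed stationary.

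Next, I would compute $s_i$ in closed form. By construction,
\begin{equation*}
s_i = \sum_{k=1}^N a_{ik} x_i x_k = x_i \sum_{k=1}^N a_{ik} x_k,
\end{equation*}
so the task reduces to simplifying $\sum_k a_{ik} x_k$. The key input is equation (\ref{S7}) from the proof of Lemma~\ref{MandB}, which can be rearranged into
\begin{equation*}
\sum_{j=1}^N a_{ij} x_j = \kappa x_i + \frac{d_i-1}{\kappa}\, x_i = x_i\left(\frac{\kappa^2-1}{\kappa}+\frac{d_i}{\kappa}\right).
\end{equation*}
Substituting back yields $s_i = x_i^2\bigl(\frac{\kappa^2-1}{\kappa}+\frac{d_i}{\kappa}\bigr)$, which matches the numerator in (\ref{B9}). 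Identifying $Q$ with the total strength $s=\sum_i s_i$ then gives the normalization in (\ref{B91}) and completes the identification $\pi_i^{\rm B}=s_i/s$.

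There is no real obstacle in the argument; the only thing to be careful about is the algebraic rewriting of (\ref{S7}) so that the factor $(\kappa^2-1)/\kappa + d_i/\kappa$ appears in exactly the form stated, and remembering that $\kappa>0$ (handled by the earlier remark excluding trees) so that dividing by $\kappa$ is legitimate. I would also briefly note that uniqueness of the stationary distribution follows from connectedness of $\mathcal{G}$, which transfers to $\mathcal{W}$ since $x_i>0$ keeps all original edges positively weighted, ensuring irreducibility of the chain.
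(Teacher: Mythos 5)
Your proposal is correct and follows essentially the same route as the paper: the paper verifies the detailed balance condition $\pi_i^{\rm B}p_{ij}=\pi_j^{\rm B}p_{ji}$ directly, which after substituting the same key identity $\sum_k a_{ik}x_k=\bigl(\tfrac{d_i}{\kappa}+\tfrac{\kappa^2-1}{\kappa}\bigr)x_i$ from the eigenvector equation reduces to exactly your computation $\pi_i^{\rm B}p_{ij}=a_{ij}x_ix_j/Q=w_{ij}/Q$. Your phrasing via the strength $s_i$ of the weighted network $\mathcal{W}$ is just a repackaging of the same argument, with the added (and welcome) remarks on irreducibility and uniqueness.
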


\begin{proof}
First, we prove that $\pi^{\rm B}$ fulfills the detailed balance condition $\pi_i^{\rm B}p_{ij}=\pi_j^{\rm B}p_{ji}$ for different $i$ and $j$. To this end, we require to compute a related quantity $\sum_{k=1}^N a_{ik}x_k$. From (\ref{S8}), we have
\begin{equation}\label{S11}
{\bf A}{\bf x}=\frac{1}{\kappa}{\bf D}{\bf x}+\frac{\kappa^2-1}{\kappa}{\bf x}, \nonumber
\end{equation}
which means
\begin{equation}\label{S12}
\sum_{k=1}^N a_{ik}x_k=\left(\frac{d_i}{\kappa}+\frac{\kappa^2-1}{\kappa}\right)x_i. \nonumber
\end{equation}
Thus, we have
\begin{equation}\label{S13}
\pi_i^{\rm B} p_{ij}=\left(\frac{\kappa^2-1}{\kappa}+\frac{d_i}{\kappa}\right)\frac{x_i^2}{Q} \cdot \frac{a_{ij}x_j}{\sum_{k=1}^N a_{ik}x_k}=\frac{a_{ij} x_i x_j}{Q}. \nonumber
\end{equation}
Similarly, we can get $\pi_j^{\rm B} p_{ji}=\frac{a_{ij} x_i x_j}{Q}$. Hence, the detailed balance condition
\begin{equation}\label{B93}
\pi_i^{\rm B}p_{ij}=\pi_j^{\rm B}p_{ji}
\end{equation}
is satisfied for all pairs of $i$ and $j$.
According to (\ref{B93}), we have
\begin{equation}\label{B92}
\sum_{i=1}^{N}\pi_{i}^{\rm B}p_{ij}=\sum_{i=1}^N\pi_j^{\rm B}p_{ji}=\pi_j^{\rm B}.
\end{equation}
In other words,
\begin{equation}\label{B94}
\pi^{\rm B}{\rm P}=\pi^{\rm B}, \nonumber
\end{equation}
showing that $\pi^{\rm B}$ is the stationary distribution for NBCRW on $\mathcal{G}$.
\end{proof}



\subsection{Hitting Times}


Let $\theta_1, \theta_2, \cdots, \theta_N$ be the $N$ eigenvalues of the Laplacian matrix  $\bf L$ for weighted network $\mathcal{W}$, rearranged as $0=\theta_1<\theta_2\leq\cdots\leq\theta_N$, and let $\phi_{1}, \phi_2, \cdots, \phi_N$ be their corresponding mutually orthogonal eigenvectors of unit length, where $\phi_i=(\phi_{i1}, \phi_{i2}, \cdots, \phi_{iN})^\top$. Then, the hitting times for NBCRW on $\mathcal{G}$ can be expressed in term of the  eigenvalues and eigenvectors of Laplacian matrix of network $\mathcal{W}$.


\begin{theorem}
For non-backtracking centrality based random walk on network $\mathcal{G}$, the hitting time from a node $i$ to another node $j$ is
\begin{equation}\label{Hit1}
T_{ij}=\frac{1}{2}\sum_{z=1}^N s_z\sum_{k=2}^N\frac{1}{\theta_k}\left(\phi_{kj}^2-\phi_{ki}\phi_{kj}-\phi_{kj}\phi_{kz}+\phi_{ki}\phi_{kz}\right),
\end{equation}
the partial mean hitting time to an arbitrary destination node $j$ is
\begin{equation}\label{L1}
T_{j}=\frac{N}{N-1}\sum_{k=2}^{N}\frac{1}{\theta_k}\left(s\times\phi_{kj}^{2}-\phi_{kj}\sum_{z=1}^{N}s_z\phi_{kz}\right),
\end{equation}
and the global mean hitting time for the whole network $\mathcal{G}$ is
\begin{equation}\label{L2}
\langle T\rangle=\frac{s}{N-1}\sum_{k=2}^{N}\frac{1}{\theta_k}.
\end{equation}
\end{theorem}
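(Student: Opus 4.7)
The plan is to leverage the equivalence established in the previous theorem between NBCRW on $\mathcal{G}$ and the ordinary random walk on the weighted companion $\mathcal{W}$. Since those two chains share the same transition matrix, they share every hitting-time quantity, so it suffices to derive (\ref{Hit1}), (\ref{L1}), (\ref{L2}) for the reversible weighted walk on $\mathcal{W}$. That walk has transition matrix $\mathbf{P}=\mathbf{S}^{-1}\mathbf{W}=\mathbf{I}-\mathbf{S}^{-1}\mathbf{L}$, stationary distribution $\pi_i=s_i/s$, and its Laplacian $\mathbf{L}=\mathbf{S}-\mathbf{W}$ is real and symmetric, so the pairs $(\theta_k,\phi_k)$ supply an orthonormal diagonalization of $\mathbf{L}$ with $\theta_1=0$ simple (because $\mathcal{G}$, and hence $\mathcal{W}$, is connected).

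For (\ref{Hit1}) I would replay the derivation behind the TURW formula (\ref{B31}) with the weighted Laplacian in place of the combinatorial one. Starting from the first-step equations $T_{ij}=1+\sum_k p_{ik}T_{kj}$ for $i\neq j$ with $T_{jj}=0$, and using the identity $\mathbf{I}-\mathbf{P}=\mathbf{S}^{-1}\mathbf{L}$, one inverts the system against the Moore--Penrose pseudoinverse
\[\mathbf{L}^{+}=\sum_{k=2}^{N}\frac{1}{\theta_k}\,\phi_k\phi_k^{\top}.\]
A standard Green's-function computation then writes $T_{ij}$ as a weighted sum over intermediate vertices $z$, with weight $s_z$, of combinations of $L^{+}_{jj}$, $L^{+}_{ij}$, $L^{+}_{jz}$ and $L^{+}_{iz}$; substituting the spectral expansion of $\mathbf{L}^{+}$ and rearranging yields (\ref{Hit1}) exactly. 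The whole algebraic chain is a verbatim transcription of the derivation giving (\ref{B31}), under the substitutions $\mathbf{A}\to\mathbf{W}$, $\mathbf{D}\to\mathbf{S}$, $d_z\to s_z$, $2E\to s$, $(\sigma_k,\mu_k)\to(\theta_k,\phi_k)$, since only the symmetry of the Laplacian is used at any point.

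Formulas (\ref{L1}) and (\ref{L2}) then fall out by averaging (\ref{Hit1}). Inserting (\ref{Hit1}) into $T_j=\frac{1}{N-1}\sum_{i=1}^{N}T_{ij}$ and using $\sum_i\phi_{ki}=0$ for $k\geq 2$ (which holds because $\phi_1=\mathbf{1}/\sqrt{N}$ and the $\phi_k$ are orthonormal), the two cross terms containing $\sum_i\phi_{ki}$ collapse to zero and what remains simplifies directly to (\ref{L1}). A further average over $j$, using $\sum_j\phi_{kj}^{2}=1$ and again $\sum_j\phi_{kj}=0$ for $k\geq 2$, causes the two sums over $j$ in (\ref{L1}) to cancel in pairs and reduces the expression to (\ref{L2}).

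The main obstacle is the first step: $\mathbf{P}$ is not symmetric, so the orthonormal $\phi_k$-basis does not diagonalize it, and some care is needed in passing between the Markov-chain linear system and statements about $\mathbf{L}^{+}$. The cleanest route I see is to symmetrize via $\mathbf{S}^{1/2}\mathbf{P}\mathbf{S}^{-1/2}=\mathbf{I}-\mathbf{S}^{-1/2}\mathbf{L}\mathbf{S}^{-1/2}$, express the fundamental matrix of the chain in that symmetric basis, and then convert back to the $(\theta_k,\phi_k)$ basis of $\mathbf{L}$; at that point the expression already matches the form in which the theorem is stated, and the averaging steps above finish the argument.
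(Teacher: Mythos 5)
Your proposal is correct and follows essentially the same route as the paper: both reduce NBCRW to the ordinary (reversible) random walk on the weighted network $\mathcal{W}$ and then apply the Laplacian-spectral hitting-time formula for weighted graphs, which the paper simply imports from \cite{LiZh13} while you sketch its derivation via the pseudoinverse $\mathbf{L}^{+}=\sum_{k\ge2}\theta_k^{-1}\phi_k\phi_k^{\top}$ and the symmetrization $\mathbf{S}^{1/2}\mathbf{P}\mathbf{S}^{-1/2}$. One bookkeeping caveat: the verbatim substitution $d_z\to s_z$, $2E\to s$ into \eqref{B31} that you describe produces \eqref{Hit1} \emph{without} the leading factor $\tfrac12$, and it is that version which your averaging argument (using $\sum_i\phi_{ki}=0$ for $k\ge2$) turns into \eqref{L1} and \eqref{L2}; with the $\tfrac12$ retained, averaging \eqref{Hit1} yields only half of \eqref{L1}, so the prefactor in the stated \eqref{Hit1} is a normalization inconsistency of the statement rather than a flaw in your argument.
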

\begin{proof}
As mentioned earlier, NBCRW on network $\mathcal{G}$ is equivalent to ordinary random walk on its weighted counterpart $\mathcal{W}$. According to our previous result~\cite{LiZh13}, the theorem follows immediately.
\end{proof}

\section{Experiments and results for Model and Realistic Networks}~\label{Sec4}

In this section, we study NBCRW on some classical model networks (e.g. Erd\"os-R\'enyi (ER) network~\cite{ErRe60} and Barab\'asi-Albert (BA) network~\cite{BaAl99}) and real networks, and compare the results of stationary distribution and hitting times for NBCRW with those corresponding to TURW and MERW.

\begin{figure}
\begin{center}
\includegraphics[width=1.15 \linewidth,trim=50 30 0 0]{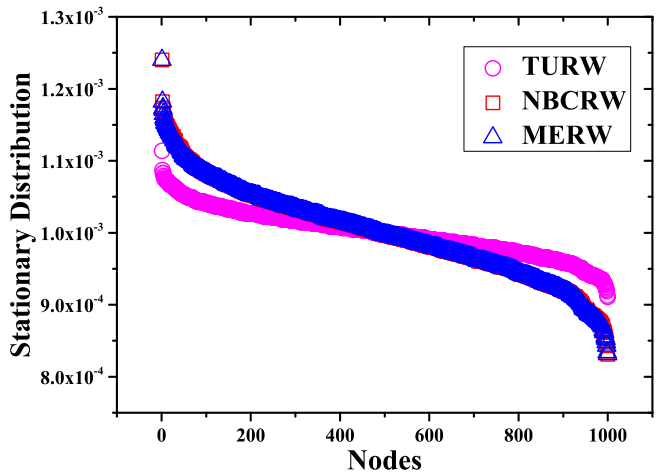}
\end{center}
\caption[kurzform]{Stationary distribution in an ER network with 1000 nodes, where each pair of nodes are connected with probability $p=0.5$. The results for TURW, MERW and NBCRW are obtained by (\ref{B3}), (\ref{B5}) and (\ref{B9}), respectively.According to decreasing order of degree, all the nodes are labeled from 1 to 1000.}\label{ER-St}
\end{figure}

\subsection{Stationary Distribution}


Fig.~\ref{ER-St} shows the stationary distribution for TURW, NBCRW and MERW in an ER network with 1000 nodes. We can see that, for all the three random walks, the stationary probability of a node approximately increases with the degree of the node: for two nodes with different degrees, the stationary probability of the large-degree node is higher than that of the small-degree node. Moreover, the stationary probabilities of the three random walks are all distributed in a narrow range: the largest stationary probability is less than twice of the smallest stationary probability. Thus, there is little difference for the stationary probability of the three random walks. In particular, the stationary distribution of NBCRW and MERW are almost identical to each other. The main reason for this phenomenon is that ER network is homogeneous, with different nodes exhibiting similar structural and dynamical properties.

Fig.~\ref{BA-St} exhibits the behaviors of stationary distribution for TURW, NBCRW and MERW on a BA network with 1000 nodes and average degree $4$. We can see that the stationary distributions are heterogeneous for all the three random walks. According to~\eqref{B3} the stationary distribution of TURW is similar to the degree distribution. Fig.~\ref{BA-St}(a) shows that the stationary probability of TURW lies in the interval $[0.0005, 0.02]$. For NBCRW and MERW, the stationary probability lies, respectively, in the intervals $[10^{-6}, 0.11]$ and $[10^{-7}, 0.16]$, the heterogeneous extent of which is more pronounced than that of TURW. In addition to heterogeneous extent, the stationary distribution of the considered random walks has obvious differences. For TURW, the stationary probability of a node is fully determined by its degree: any two nodes with the identical degree have the same stationary probability. For NBCRW and MERW, two different nodes generally have different stationary probabilities, in spite of their degrees. Thus, the stationary probabilities of NBCRW and MERW can discriminate nodes in the BA networks, including those with identical degree.

However, even for NBCRW and MERW in BA networks, their stationary probabilities differ greatly from each other. For the hub node 1, the stationary probability for MERW is greater than that of NBCRW; while for small-degree nodes, excluding those neighboring nodes of the hub, the stationary probability of a node for MERW is much lower than that corresponding to NBCRW. The insets show that for those 200 small-degree nodes with the lowest stationary probabilities, their stationary probabilities are almost below $10^{-5}$ for MERW, but there are over 150 nodes with stationary probabilities larger than $10^{-5}$ for NBCRW.


\begin{figure*}
\begin{center}
\includegraphics[width=1.02 \linewidth,trim=0 0 0 0]{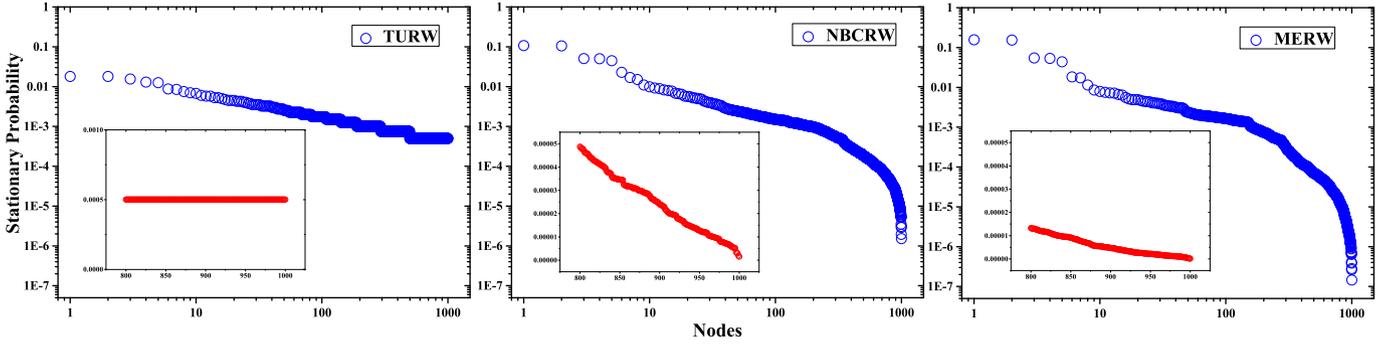}
\end{center}
\caption[kurzform]{Stationary distribution in a BA network with 1000 nodes and average degree $4$. (a): Stationary distribution of TURW, calculated by (\ref{B3}). (b): Stationary distribution of NBCRW, calculated by (\ref{B9}). (c): Stationary distribution of MERW, calculated by (\ref{B5}). According to stationary probability, all the 1000 nodes are labelled from 1 to 1000. The insets are the stationary probabilities of the 200 nodes with the smallest stationary probability.}\label{BA-St}
\end{figure*}


In order to reflect the heterogeneous extent of stationary distributions between NBCRW and MERW in BA networks, we compute the inverse participation ratio $S=\sum_{i=1}^{N}\pi_i^2$, which is a standard quantity characterizing localization or inhomogeneity of an indicator~\cite{BeDe70}: the larger the value $S=\sum_{i=1}^{N}\pi_i^2$, the more heterogeneous the stationary distribution. In Table~\ref{StTab}, we list the inverse participation ratio for NBCRW and MERW on some model and real networks. 
From Table~\ref{StTab} we can see that for all considered model and real networks, the heterogeneity of stationary distribution of MERW is more pronounced than that of NBCRW.


\begin{table}
\caption{Inverse participation ratio of stationary distribution for NBCRW and MERW in a variety of networks. }\label{StTab}
\normalsize
\centering
\begin{tabular}{|c|c|c|c|}
\hline
\raisebox{-0.5ex}{Network} & \raisebox{-0.5ex}{Size} & \raisebox{-0.5ex}{NBCRW} & \raisebox{-0.5ex}{MERW}\\[0.5ex]
\hline
\hline
\raisebox{-0.5ex}{BA-model} & \raisebox{-0.5ex}{$5000$} & \raisebox{-0.5ex}{$2.346\times10^{-2}$} & \raisebox{-0.5ex}{$8.589\times10^{-2}$}\\[0.5ex]
\hline
\raisebox{-0.5ex}{WS-model~\cite{WaSt98}} & \raisebox{-0.5ex}{$1000$} & \raisebox{-0.5ex}{$1.692\times10^{-3}$} & \raisebox{-0.5ex}{$2.510\times10^{-3}$}\\[0.5ex]
\hline
\raisebox{-0.5ex}{Dolphins~\cite{LuScBoHaSlDa03}} & \raisebox{-0.5ex}{$53$} & \raisebox{-0.5ex}{$4.938\times10^{-2}$} & \raisebox{-0.5ex}{$5.312\times10^{-2}$}\\[0.5ex]
\hline
\raisebox{-0.5ex}{ca-NetSci~\cite{Ne06}} & \raisebox{-0.5ex}{$379$} & \raisebox{-0.5ex}{$7.372\times10^{-2}$} & \raisebox{-0.5ex}{$7.938\times10^{-2}$}\\[0.5ex]
\hline
\raisebox{-0.5ex}{C.elegans~\cite{DuAr05}} & \raisebox{-0.5ex}{$448$} & \raisebox{-0.5ex}{$3.369\times10^{-2}$} & \raisebox{-0.5ex}{$4.057\times10^{-2}$}\\[0.5ex]
\hline
\raisebox{-0.5ex}{E-mail~\cite{GuDaDiGiAr03}} & \raisebox{-0.5ex}{$1133$} & \raisebox{-0.5ex}{$8.517\times10^{-3}$} & \raisebox{-0.5ex}{$9.557\times10^{-3}$}\\[0.5ex]
\hline
\raisebox{-0.5ex}{P2P~\cite{LeKlFa07}} & \raisebox{-0.5ex}{$6299$} & \raisebox{-0.5ex}{$7.665\times10^{-3}$} & \raisebox{-0.5ex}{$7.945\times10^{-3}$}\\[0.5ex]
\hline
\end{tabular}
\end{table}


\subsection{Hitting times}

Analogous to the case of stationary distribution, there are little dissimilarity for the behaviors of hitting times between NBCRW, MERW and TURW for homogeneous networks, e.g., ER networks. Below we study hitting times on heterogeneous networks, focusing on two representative cases: mean hitting time to the hub node $T_{\rm H}$ and the global mean hitting time $\langle T\rangle$. Figs.~\ref{BA-Hub} and \ref{BA-GATT} display, respectively, $T_{\rm H}$ and $\langle T\rangle$ for the three random walks in BA networks with node number $N$ changing from $ 1000$ to $10000$.



\begin{figure}
\begin{center}
\includegraphics[width=1.1
\linewidth,trim=50 50 0 50]{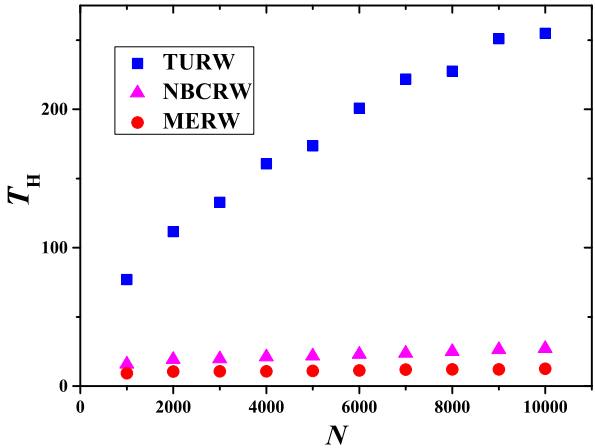}
\end{center}
\caption[kurzform]{Mean hitting time to the hub node for different random walks in BA networks with average degree $4$. The calculation of $T_{\rm H}$ for TURW, MERW and NBCRW are based on (\ref{B32}), (\ref{B52}) and (\ref{L1}), respectively.}\label{BA-Hub} 
\end{figure}

\begin{figure}
\begin{center}
\includegraphics[width=1.1 \linewidth,trim=40 40 0 0]{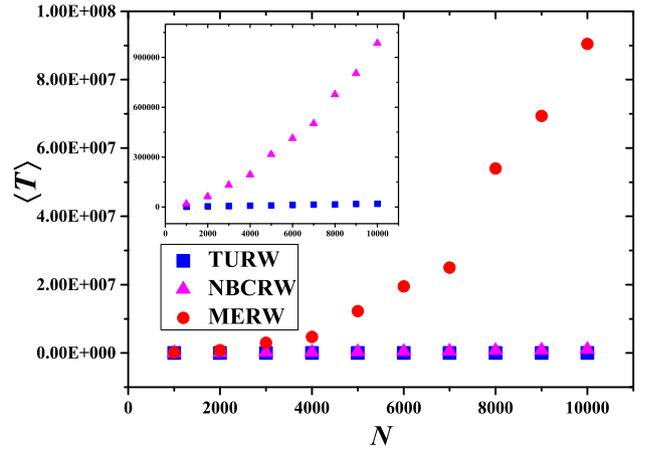}
\end{center}
\caption[kurzform]{Global mean hitting times for TURW, NBCRW and MERW in BA networks with average degree $4$. The inset provides results for TURW and NBCRW for comparison. The calculation of $\langle T\rangle$ for TURW, MERW and NBCRW are based on (\ref{B33}), (\ref{B53}) and (\ref{L2}), respectively.}\label{BA-GATT} 
\end{figure}


From Fig.~\ref{BA-Hub}, we can see that when the hub is the target node, the mean absorbing time is the least for MERW, slightly smaller than that for NBCRW. In contrast, the mean absorbing time to the hub for TURW is significantly higher than those for MERW and NBCRW, which are all in inverse proportion to their corresponding stationary probabilities. In a previous work~\cite{LiZh14}, we have proved that in BA networks, the asymptotical scaling for mean hitting time to the hub for MERW and TURW are $\ln{N}$ and $N^{1/2}$, respectively, both of which are consistent with Fig.~\ref{BA-Hub}.


As opposed to the sublinear scaling of partial mean hitting time $T_H$ to the hub for TURW, NBCRW and MERW in BA networks, the global mean hitting time $\langle T\rangle$ for the three random walks behaves linearly for TURW and superlinearly for NBCRW and MERW, as indicated in Fig.~\ref{BA-GATT}. Although for NBCRW and MERW $\langle T\rangle \sim N^{\rho}$ with $\rho >1$, the power exponent $\rho$ of NBCRW is less than that of MERW. In addition, combining the results in Figs.~\ref{BA-Hub} and \ref{BA-GATT}, we found that among the three random walks, $T_H$ is the largest and $\langle T\rangle$ is the lowest for TURW, with the latter achieving the possible minimal scaling for TURW on all connected networks~\cite{TeBeVo09}; $T_H$ is the smallest and $\langle T\rangle$ is the largest For MERW. for NBCRW, both $T_H$ and $\langle T\rangle$ lie between those associated with TURW and MERW. Thus, for TURW, NBCRW and MERW on a heterogeneous network, the mean absorbing time to a particular target is not representative of the network.

\begin{table*}
\caption{Mean hitting time to a hub node and global mean hitting time for TURW, NBCRW and MERW in a variety of networks.}\label{MFPTTab}
\normalsize
\centering
\begin{tabular}{|c|c|c|c|c|c|c|c|}
\hline
\raisebox{-0.5ex}{Network} & \raisebox{-0.5ex}{Size} & \raisebox{-0.5ex}{$T_{\rm H}^{\rm T}$} & \raisebox{-0.5ex}{$T_{\rm H}^{\rm B}$} & \raisebox{-0.5ex}{$T_{\rm H}^{\rm M}$} & \raisebox{-0.5ex}{$\langle T\rangle^{\rm T}$} & \raisebox{-0.5ex}{$\langle T\rangle^{\rm B}$} & \raisebox{-0.5ex}{$\langle T\rangle^{\rm M}$}\\[0.5ex]
\hline
\hline
\raisebox{-0.5ex}{BA-model} & \raisebox{-0.5ex}{5000} & \raisebox{-0.5ex}{173.9} & \raisebox{-0.5ex}{22.29} & \raisebox{-0.5ex}{9.466} & \raisebox{-0.5ex}{9837} & \raisebox{-0.5ex}{$3.005\times10^5$} & \raisebox{-0.5ex}{$7.960\times10^6$} \\[0.5ex]
\hline
\raisebox{-0.5ex}{WS-model} & \raisebox{-0.5ex}{1000} & \raisebox{-0.5ex}{557.4} & \raisebox{-0.5ex}{246.7} & \raisebox{-0.5ex}{147.0} & \raisebox{-0.5ex}{1667} & \raisebox{-0.5ex}{2440} & \raisebox{-0.5ex}{3396}\\[0.5ex]
\hline
\raisebox{-0.5ex}{Dolphins} & \raisebox{-0.5ex}{53} & \raisebox{-0.5ex}{39.79} & \raisebox{-0.5ex}{14.43} & \raisebox{-0.5ex}{13.05} & \raisebox{-0.5ex}{106.6} & \raisebox{-0.5ex}{3175} & \raisebox{-0.5ex}{6684}\\[0.5ex]
\hline
\raisebox{-0.5ex}{ca-NetSci} & \raisebox{-0.5ex}{379} & \raisebox{-0.5ex}{488.6} & \raisebox{-0.5ex}{14.20} & \raisebox{-0.5ex}{12.56} & \raisebox{-0.5ex}{1892} & \raisebox{-0.5ex}{$2.980\times10^{12}$} & \raisebox{-0.5ex}{$2.767\times10^{13}$}\\[0.5ex]
\hline
\raisebox{-0.5ex}{C.elegans} & \raisebox{-0.5ex}{448} & \raisebox{-0.5ex}{19.82} & \raisebox{-0.5ex}{7.929} & \raisebox{-0.5ex}{6.472} & \raisebox{-0.5ex}{1045} & \raisebox{-0.5ex}{$2.398\times10^6$} & \raisebox{-0.5ex}{$4.421\times10^6$}\\[0.5ex]
\hline
\raisebox{-0.5ex}{E-mail} & \raisebox{-0.5ex}{1133} & \raisebox{-0.5ex}{180.0} & \raisebox{-0.5ex}{27.43} & \raisebox{-0.5ex}{24.46} & \raisebox{-0.5ex}{3713} & \raisebox{-0.5ex}{$7.633\times10^{6}$} & \raisebox{-0.5ex}{$1.118\times10^{7}$}\\[0.5ex]
\hline
\raisebox{-0.5ex}{P2P} & \raisebox{-0.5ex}{6299} & \raisebox{-0.5ex}{476.1} & \raisebox{-0.5ex}{51.03} & \raisebox{-0.5ex}{48.67} & \raisebox{-0.5ex}{$2.094\times10^{4}$} & \raisebox{-0.5ex}{$1.232\times10^{10}$} & \raisebox{-0.5ex}{$2.041\times10^{10}$}\\[0.5ex]
\hline
\end{tabular}
\end{table*}


In addition to the BA networks, we also study partial mean hitting time to the hub and global mean hitting time for TURW, NBCRW and MERW in other synthetic and real networks. In Table~\ref{MFPTTab}, we provide related results for these three random walks, where superscripts ${\rm T}$, ${\rm B}$, and ${\rm M}$ are used to represent the quantities corresponding to TURW, NBCRW and MERW, respectively. From Table~\ref{MFPTTab} we observe that $ T_{\rm H}^{\rm T}>T_{\rm H}^{\rm B}> T_{\rm H}^{\rm M}$ and $\langle T\rangle^{\rm M}>\langle T\rangle^{\rm B}>\langle T\rangle^{\rm T}$ for all studied model and realistic networks.


\section{Analytical Results for NBCBRW on Rose Graphs}~\label{SecS}


In the preceding section, we show that in some model and real networks, the behaviors of NBCRW are strikingly different from those of TURW and MERW. Since many real-life networks are scale-free, analytically unveiling the impact of heterogeneous topology on random walks is important for better understanding its dynamical behaviors and applications. In this section, we study analytically and numerically NBCRW, TURW and MERW on a class of heterogeneous rose graphs~\cite{LiHu13}. For a particular rose graph, we obtain closed-form expressions for stationary distribution and hitting times for these three random walks, and obtain numerical results for general rose graphs, which widely differ from one another. Based on the results, we can discover the impact of topological heterogeneity on NBCRW, TURW and MERW are evidently different.


\subsection{Construction of Rose Graphs}

The rose graphs are a family of deterministic networks, which allow to analytically treat some of their structural and  dynamical properties. Let $\mathcal{R}_m^l$ denote the rose graphs, which are constructed by merging $m$ ($m\geq 2$) $l$-length ($l$ is even) cycles at a central hub node. Here we focus on a specific class of rose graphs, $\mathcal{R}_m^4$ with each petal being 4-length rings, see Fig.~\ref{KGSG}(a). It is easy to derive that in $\mathcal{R}_{m}^4$ the total number of nodes is $N_m=3m+1$, and the total number of edges is $E_m=6m+2$.

For the convenience of description, we partition all the $N_m$ nodes of $\mathcal{R}_m^4$ into three classes: hub node,  peripheral nodes, and  internal nodes. The hub node is the unique node of the largest degree, the peripheral nodes are those $m$ nodes farthermost  from the hub node, while the remaining $2m$ nodes are internal nodes, each of which is linked to the hub node and  a peripheral node. Furthermore, the $3m+1$ nodes can be labelled from $1$ to  $3m+1$ in the following way.  We label by $1$ the hub node. For the nodes in the $i$th ($i=1,2,\cdots,m$) petal, the two internal nodes are labeled as $3(i-1)+2$ and $3(i-1)+3$, while the peripheral node is labeled as $3(i-1)+4$.


\begin{figure*}
\begin{center}
\includegraphics[width=0.8 \linewidth,trim=0 0 0 0]{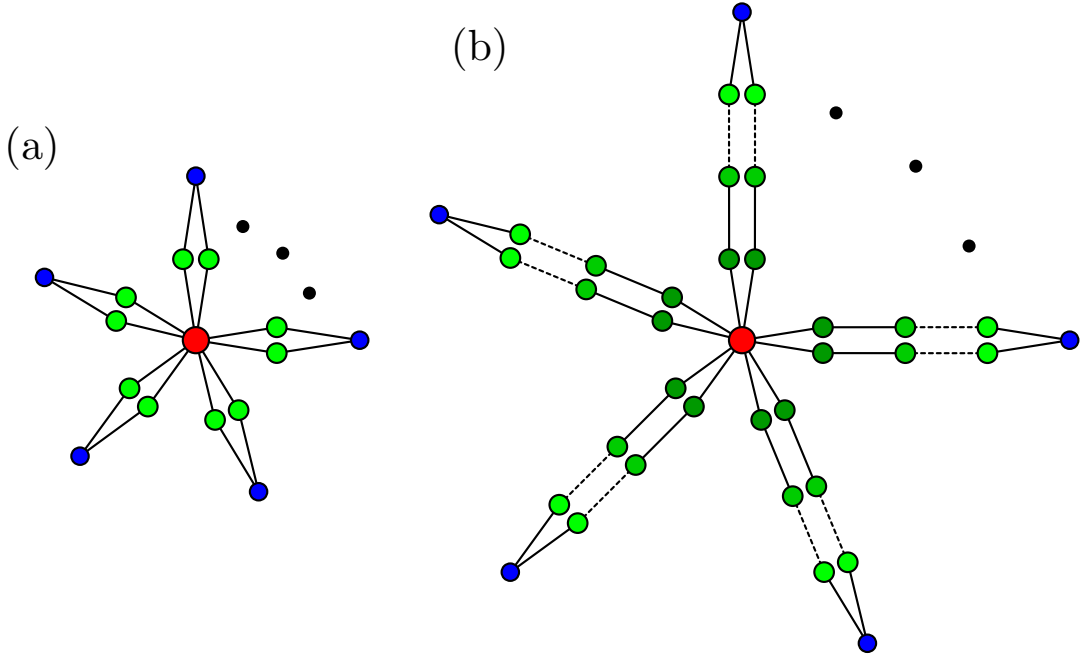}
\end{center}
\caption[kurzform]{Rose graphs. (a): Rose graphs $\mathcal{R}_m^4$. (b): Rose graphs $\mathcal{R}_m^{l}$, where each of $m$ petals is a $l$-length cycle. In either graph, the red node, green nodes and blue nodes stand for the hub, internal nodes, and peripheral nodes, respectively.}\label{KGSG}
\end{figure*}

\subsection{Stationary Distribution} 


For TURW on $\mathcal{R}_m^4$, the stationary distribution can be obtained from~\eqref{B3}. For NBCRW and MERW on  $\mathcal{R}_m^4$, their stationary distributions can also be determined exactly.
\begin{theorem}\label{T1}
For NBCRW on rose graphs $\mathcal{R}_m^4$, the stationary probability at the hub node, an internal node, and a peripheral node are
\begin{align}\label{B14}
\pi_{\rm H}^{\rm B}&=\frac{m}{2(m+\sqrt{2m-1})}=\frac{N_m-1}{2(N_m-1)+2\sqrt{6N_m-15}},
\end{align}
\begin{align}\label{B15}
\pi_{\rm I}^{\rm B}&=\frac{1}{4m}=\frac{3}{4(N_m-1)}
\end{align}
and
\begin{align}\label{B16}
\pi_{\rm P}^{\rm B}&=\frac{m\sqrt{2m-1}-2m+1}{2m(m-1)^2}\nonumber\\
&=\frac{3(N_m-1)\sqrt{6N_m-15}-18N_m+45}{2(N_m-1)(N_m-4)^2},
\end{align}
respectively.
\end{theorem}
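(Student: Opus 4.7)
The plan is to exploit the vertex-transitivity within each of the three node classes (hub, internal, peripheral). By the symmetry of $\mathcal{R}_m^4$, the non-backtracking centralities take only three distinct values, which I will call $x_H$, $x_I$, $x_P$. Applying the key identity from the proof of Lemma~\ref{MandB}, namely $\sum_{j} a_{ij} x_j = \frac{d_i + \kappa^2 - 1}{\kappa}\, x_i$, to a representative node of each class (using $d_H = 2m$ and $d_I = d_P = 2$) yields a closed $3 \times 3$ system:
\begin{align}
2m\, x_I &= \tfrac{2m + \kappa^2 - 1}{\kappa}\, x_H, \nonumber\\
x_H + x_P &= \tfrac{\kappa^2 + 1}{\kappa}\, x_I, \nonumber\\
2\, x_I &= \tfrac{\kappa^2 + 1}{\kappa}\, x_P. \nonumber
\end{align}

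From the third equation I read off $x_P = \tfrac{2\kappa}{\kappa^2+1}\, x_I$; substituting into the second gives $x_H = \tfrac{\kappa^4 + 1}{\kappa(\kappa^2+1)}\, x_I$. Plugging both into the first equation and clearing denominators produces a polynomial in $\kappa$ that factors as $(\kappa^2 - 1)(\kappa^4 - (2m-1)) = 0$. Since the relevant leading eigenvalue must exceed $1$ for a non-tree graph, I would discard the spurious root $\kappa^2=1$ and take $\kappa^2 = \sqrt{2m-1}$. Writing $s := \sqrt{2m-1}$ for brevity, I then have the compact ratios $x_H/x_I = 2m/[\kappa(s+1)]$ and $x_P/x_I = 2\kappa/(s+1)$.

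The last step is to feed these into the theorem's formula $\pi_i^{\rm B} \propto \bigl(\tfrac{\kappa^2-1}{\kappa} + \tfrac{d_i}{\kappa}\bigr) x_i^2$. The trick that makes the algebra painless is to recognize the coefficient in parentheses as exactly the right-hand side of the eigenvalue equation divided by $x_i$, so that
\begin{equation}
\pi_H^{\rm B} \propto 2m\, x_H x_I, \qquad
\pi_I^{\rm B} \propto (x_H + x_P)\, x_I, \qquad
\pi_P^{\rm B} \propto 2\, x_I x_P. \nonumber
\end{equation}
Counting multiplicities ($1$ hub, $2m$ internal, $m$ peripheral nodes), the normalizer collapses to $Q \propto 4m\, x_I (x_H + x_P)$. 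The hub probability reduces to $\tfrac{x_H}{2(x_H+x_P)}$, and substituting the ratios and using $x_H + x_P = \tfrac{2 x_I (m+s)}{\kappa(s+1)}$ immediately gives $\tfrac{m}{2(m+s)}$, i.e.\ formula~(\ref{B14}). The internal formula is the cleanest: the $x_I (x_H+x_P)$ factors cancel, leaving $\pi_I^{\rm B} = \tfrac{1}{4m}$ with no $\kappa$-dependence at all, matching~(\ref{B15}). For the peripheral nodes I obtain $\pi_P^{\rm B} = \tfrac{s}{2m(m+s)}$, and rationalizing by $(m-s)$ (noting $m^2 - s^2 = m^2 - 2m + 1 = (m-1)^2$) converts this to the form~(\ref{B16}).

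The only non-routine step is the eigenvalue computation: one must verify that $\kappa^2 = \sqrt{2m-1}$ rather than $\kappa^2 = 1$ corresponds to the Perron eigenvalue of $\mathbf{B}$. I would justify this either by checking that the Perron eigenvector associated with $\kappa^2 = \sqrt{2m-1}$ has strictly positive $x_H, x_I, x_P$ (which it does, since $\kappa^4 + 1 > 0$ and $\kappa > 0$), or by noting that $\mathcal{R}_m^4$ contains cycles and hence has $\kappa > 1$ by standard non-backtracking spectral results, ruling out the root $\kappa = 1$. Everything else is bookkeeping and the identity $m^2 - (2m-1) = (m-1)^2$ used in the final simplification.
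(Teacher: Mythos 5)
Your proposal is correct --- I checked the algebra and it reproduces (\ref{B14})--(\ref{B16}) exactly --- but it takes a genuinely leaner route than the paper's. The paper (Appendix A) first computes the full characteristic polynomial of $\mathbf{M}$ by block-determinant manipulations, factoring it as $(\kappa+1)(\kappa-1)(\kappa^2+1)(\kappa^4-2m+1)(\kappa^6+\kappa^4+\kappa^2+1)^{m-1}$ to read off $\kappa_1=(2m-1)^{1/4}$, and then solves for the \emph{unit-normalized} components $x_{\rm H},x_{\rm I},x_{\rm P}$ (the unwieldy surds in its Eq.~(A.11)) before assembling $Q$ explicitly. You instead obtain $\kappa$ as the consistency condition of the symmetry-reduced $3\times 3$ system --- your factorization $(\kappa^2-1)(\kappa^4-(2m-1))=0$ is a correct condensation of the paper's full spectrum --- and, more importantly, you never normalize the eigenvector: recognizing that $\bigl(\tfrac{\kappa^2-1+d_i}{\kappa}\bigr)x_i=\sum_j a_{ij}x_j$ turns $\pi_i^{\rm B}$ into $x_i\sum_j a_{ij}x_j/Q$ (equivalently, the strength $s_i$ of node $i$ in the weighted graph $\mathcal{W}$ divided by the total strength, which is really what the paper's Theorem 3.4 is saying), so everything reduces to the ratios $x_{\rm H}:x_{\rm I}:x_{\rm P}$ and $Q$ collapses to $4m\,x_{\rm I}(x_{\rm H}+x_{\rm P})$. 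This buys a much shorter computation; what the paper's heavier route buys is the entire spectrum of $\mathbf{M}$ and an eigenvalue identification that does not rely on the leading eigenvector being constant on the three node classes. That reliance is the one point in your argument that deserves an explicit sentence: the symmetric ansatz only sees a sub-block of the spectrum, so you must either invoke simplicity of the Perron root of the irreducible matrix $\mathbf{B}$ (whence its eigenvector is automorphism-invariant and hence class-constant), or use your stated fallback that exhibiting a strictly positive eigenvector for $\kappa=(2m-1)^{1/4}$ certifies it as the Perron root --- for the latter you should note that positivity must be checked for the lifted edge-vector $v_{i\to j}=(\kappa x_j-x_i)/(\kappa^2-1)$, not merely for $x_{\rm H},x_{\rm I},x_{\rm P}$. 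With that caveat addressed, the proof is complete and arguably cleaner than the published one.
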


The proof is presented in \ref{app-1}.

\begin{theorem}\label{T2}
For MERW on rose graphs $\mathcal{R}_m^4$, the stationary probability at the hub node, an internal node, and a peripheral node are
\begin{equation}\label{B17}
\pi_{\rm H}^{\rm M}=\frac{m}{2m+2}=\frac{N_m-1}{2(N_m-1)+6},
\end{equation}
\begin{equation}\label{B18}
\pi_{\rm I}^{\rm M}=\frac{1}{4m}=\frac{3}{4(N_m-1)},
\end{equation}
and
\begin{equation}\label{B19}
\pi_{\rm P}^{\rm M}=\frac{1}{2m(m+1)}=\frac{9}{2(N_m+2)(N_m-1)},
\end{equation}
respectively.
\end{theorem}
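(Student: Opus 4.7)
The plan is to invoke formula (\ref{B5}), $\pi_i^{\rm M}=\psi_{1i}^2$, and determine the principal eigenvector $\psi_1$ of the adjacency matrix $\mathbf{A}$ of $\mathcal{R}_m^4$ by exploiting symmetry. The automorphism group of $\mathcal{R}_m^4$ permutes all $m$ petals, and within each petal it also swaps the two internal nodes; thus it acts transitively on each of the three node classes introduced in the construction. Because the leading eigenvector of a non-negative irreducible matrix is unique up to scaling (Perron--Frobenius), it must be constant on each orbit: write $\psi_{\rm H}$, $\psi_{\rm I}$, $\psi_{\rm P}$ for its common values on the hub, on the internal nodes, and on the peripheral nodes.

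Next, I would write down $\mathbf{A}\psi_1=\lambda_1\psi_1$ class by class. The hub has $2m$ internal neighbors, each internal node neighbors the hub and one peripheral node, and each peripheral node neighbors two internal nodes, giving
\begin{equation*}
2m\,\psi_{\rm I}=\lambda_1\psi_{\rm H},\qquad \psi_{\rm H}+\psi_{\rm P}=\lambda_1\psi_{\rm I},\qquad 2\psi_{\rm I}=\lambda_1\psi_{\rm P}.
\end{equation*}
Eliminating $\psi_{\rm P}$ from the third equation and substituting into the second yields $\psi_{\rm H}=(\lambda_1^2-2)\psi_{\rm I}/\lambda_1$; plugging this into the first gives $\lambda_1^2=2m+2$, and I take $\lambda_1=\sqrt{2m+2}>0$ (the Perron root). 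The ratios then simplify to $\psi_{\rm H}:\psi_{\rm I}:\psi_{\rm P}=2m/\sqrt{2m+2}:1:2/\sqrt{2m+2}$.

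Finally I would impose the unit-norm condition $\psi_{\rm H}^2+2m\,\psi_{\rm I}^2+m\,\psi_{\rm P}^2=1$. Setting $\psi_{\rm I}=1$ temporarily, the left side evaluates to $2m^2/(m+1)+2m+2m/(m+1)=4m$, so dividing the three squared components by $4m$ gives
\begin{equation*}
\pi_{\rm H}^{\rm M}=\frac{m}{2(m+1)},\qquad \pi_{\rm I}^{\rm M}=\frac{1}{4m},\qquad \pi_{\rm P}^{\rm M}=\frac{1}{2m(m+1)},
\end{equation*}
which, after substituting $m=(N_m-1)/3$, match (\ref{B17})--(\ref{B19}). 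The only subtle point, and the one I would want to check explicitly, is that the symmetric eigenvector we have constructed is indeed the \emph{principal} one, i.e.\ that $\sqrt{2m+2}$ is the largest eigenvalue of $\mathbf{A}$; this follows from Perron--Frobenius because our $\psi_1$ has strictly positive entries and $\mathcal{R}_m^4$ is connected (and from the fact that the spectrum of $\mathbf{A}$ is contained in the $3\times 3$ reduced spectrum on symmetric modes together with antisymmetric modes living on petals, whose eigenvalues are bounded by $2$ in absolute value).
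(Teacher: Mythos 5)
Your proposal is correct and follows essentially the same route as the paper's proof: reduce the eigenvector equation $\mathbf{A}\psi_1=\lambda_1\psi_1$ to the three node classes, obtain $\lambda_1=\sqrt{2m+2}$ and the component ratios, normalize (correctly weighting the $m$ peripheral nodes), and square to get the stationary probabilities via $\pi_i^{\rm M}=\psi_{1i}^2$. The only difference is that you make explicit the symmetry/Perron--Frobenius justification for the eigenvector being constant on classes and for $\sqrt{2m+2}$ being the Perron root, which the paper leaves implicit.
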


The proof is presented in \ref{app-2}.

Thus far, we have obtained the stationary distribution for NBCRW and MERW on $\mathcal{R}_m^4$. For TURW on $\mathcal{R}_m^4$, the stationary distribution is determined by the degree sequence of nodes and  can be directly computed from~\eqref{B3}, from which we obtain that the stationary probability at the hub node, an interval node, and a peripheral node are $T_{\rm H}^{\rm T}=\frac{m}{3m+1}=\frac{N_m-1}{3N_m}$, $T_{\rm I}^{\rm T}=\frac{1}{3m+1} =\frac{3}{2(N_m-1)}$, and $T_{\rm P}^{\rm T}=\frac{1}{3m+1} =\frac{3}{2(N_m-1)}$, respectively. If we choose stationary probability as an indicator of node importance, the stationary distribution of TURW fails to differentiate the internal nodes and peripheral nodes in $\mathcal{R}_m^4$, since the degree of the two internal and peripheral nodes is equal to each other. We will show that this shortcoming can be overcome by using the stationary distribution for NBCRW and MERW, although they also differ greatly.


For NBCRW and MERW on $\mathcal{R}_m^4$, Theorems \ref{T1} and \ref{T2} show that $\pi_{\rm H}^{\rm M}>\pi_{\rm H}^{\rm B}$, $\pi_{\rm I}^{\rm M}=\pi_{\rm I}^{\rm B}$, and $\pi_{\rm P}^{\rm M}<\pi_{\rm P}^{\rm B}$. Moreover, for both NBCRW and MERW, the stationary probability for internal nodes and peripheral nodes are different, in spite of the fact that their degree is identical. Thus, stationary distribution of NBCRW and MERW can discriminate between an internal node and a peripheral node. However, there exist differences between the stationary probability of NBCRW and MERW. For example, from~(\ref{B16}) and (\ref{B19}) we can see that the stationary probability of a peripheral node for NBCRW gets a fraction $\mathcal{O}(N_m^{-2/3})$, larger than the fraction $\mathcal{O}(N_m^{-2})$ received for MERW. Therefore, in comparison with MERW, the stationary probabilities of NBCRW are distributed over a narrow range of values.



In order to further unveil the distinction of stationary distribution between NBCRW and MERW. We compare the stationary distributions for NBCRW and MERW in the rose graph $\mathcal{R}_3^{20}$ with 58 nodes, among which the hub node has degree 6, while each of the other 57 nodes has a degree of 2. We can classify the 58 nodes in $\mathcal{R}_3^{20}$ by designating a level number to each node according to its shortest distance to the hub node: the hub node is at lever zero, the neighboring nodes of the hub are at level one, and the farthermost nodes from the hub are at level ten.

\begin{figure}
\begin{center}
\includegraphics[width=1.15 \linewidth,trim=50 0 0 0]{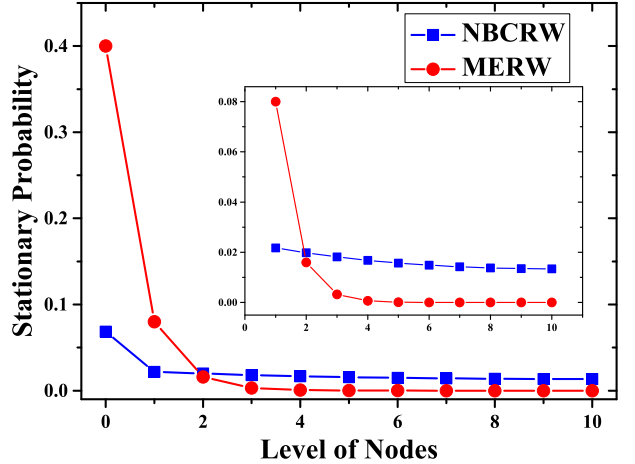}
\end{center}
\caption[kurzform]{Stationary probabilities of different nodes for NBCRW and MERW in the rose graph $\mathcal{R}_3^{20}$.}\label{KGSG-St}
\end{figure}

Fig.~\ref{KGSG-St} provides numerical results of stationary distributions of NBCRW and MERW for every node in  $\mathcal{R}_3^{20}$, which shows that for both NBCRW and MERW, the stationary probability depends on the level: the smaller the level of a node, the larger its stationary probability. However, their differences are also striking. For MERW, the stationary probability almost concentrates around the hub node and its neighbors, with other nodes getting vanishing weight; for NBCRW, although the stationary probability of the hub is also significantly larger than those of other nodes, the stationary probability of every node is nonvanishing and greater than 0.01, as shown in the inset of Fig.~\ref{KGSG-St}. Thus, if we use stationary probability to measure relative importance of nonhub nodes, MERW is hard to distinguish those nodes at large levels, which can be discriminated by NBCRW.



\subsection{Hitting Times}

In addition to stationary distribution, for NBCRW on the rose graph $\mathcal{R}_m^4$, the partial mean hitting time to the hub node and the global mean hitting time of the whole network can also be determined explicitly. For the purpose of comparison, we also provide the corresponding exact results for TURW and MERW. 

\begin{theorem}\label{T3}
The partial mean hitting time to the hub node for TURW, NBCRW and MERW in $\mathcal{R}_m^4$ are
\begin{equation}\label{C4}
T_{\rm H}^{\rm T}=\frac{10}{3},
\end{equation}
\begin{equation}\label{E1}
T_{\rm H}^{\rm B}=\frac{4}{3}+\frac{2\sqrt{2m-1}}{m}=\frac{4}{3}+\frac{2\sqrt{6N_m-15}}{N_m-1}
\end{equation}
and
\begin{equation}\label{E2}
T_{\rm H}^{\rm M}=\frac{4}{3}+\frac{2}{m}=\frac{4}{3}+\frac{6}{N_m-1},
\end{equation}
respectively.
\end{theorem}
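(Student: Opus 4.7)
The plan is to exploit the rotational symmetry of $\mathcal{R}_m^4$ to collapse the computation into a $2 \times 2$ linear system and then average. The cyclic group permuting the $m$ petals, together with the reflection swapping the two internal nodes of a given petal, acts on $\mathcal{R}_m^4$ fixing the hub. Hence, for each of the three walks, all $2m$ internal nodes share one hitting time to the hub, which I denote $T_{\rm I}$, and all $m$ peripheral nodes share a common hitting time $T_{\rm P}$. Once $T_{\rm I}$ and $T_{\rm P}$ are known,
\begin{equation*}
T_{\rm H}=\frac{1}{3m}\bigl(2m\,T_{\rm I}+m\,T_{\rm P}\bigr)=\frac{2T_{\rm I}+T_{\rm P}}{3}.
\end{equation*}

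Next I write first-step equations. A peripheral node has its two neighbors among internal nodes of the same petal, each of which by symmetry has the same centrality (or principal eigenvector entry), so the one-step law from a peripheral node is the \emph{same} for all three walks: $T_{\rm P}=1+T_{\rm I}$. From an internal node, the two neighbors are the hub and a peripheral node; writing $q$ for the transition probability internal $\to$ peripheral (so the probability internal $\to$ hub is $1-q$), I get $T_{\rm I}=1+q\,T_{\rm P}$. Substituting $T_{\rm P}=1+T_{\rm I}$ and solving yields
\begin{equation*}
T_{\rm I}=\frac{1+q}{1-q},\qquad T_{\rm P}=\frac{2}{1-q},\qquad T_{\rm H}=\frac{4}{3(1-q)}.
\end{equation*}
The whole theorem reduces to evaluating $q$ for each of the three walks.

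For TURW, $q=\tfrac12$ immediately (the internal node has degree 2), giving $T_{\rm H}^{\rm T}=\tfrac{10}{3}$. For NBCRW, $q=x_{\rm P}/(x_{\rm H}+x_{\rm P})$, where $x_{\rm H}$, $x_{\rm I}$, $x_{\rm P}$ are the common non-backtracking centralities of the hub, internal, and peripheral nodes; for MERW, $q=\psi_{1,{\rm P}}/(\psi_{1,{\rm H}}+\psi_{1,{\rm P}})$ (using $\sum_{j\in\mathcal{N}_i}\psi_{1j}=\lambda_1\psi_{1i}$ at the internal node, which makes $q$ depend only on this single ratio). These two ratios are precisely what is computed in the proofs of Theorems~\ref{T1} and \ref{T2} in the appendices: from the characteristic equation of the reduced $3\times3$ problem governing $(x_{\rm H},x_{\rm I},x_{\rm P})$ one extracts $\kappa=\sqrt{2m-1}$ and $x_{\rm P}/x_{\rm H}=\sqrt{2m-1}/m$, hence $q^{\rm B}=\sqrt{2m-1}/(m+\sqrt{2m-1})$ and $1-q^{\rm B}=m/(m+\sqrt{2m-1})$, producing $T_{\rm H}^{\rm B}=\tfrac43+2\sqrt{2m-1}/m$. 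Similarly for MERW one finds $\lambda_1=\sqrt{m+1}+\tfrac{1}{\sqrt{m+1}}$ (or equivalently $\psi_{1,{\rm P}}/\psi_{1,{\rm H}}=1/m$), whence $q^{\rm M}=1/(m+1)$, $1-q^{\rm M}=m/(m+1)$, and $T_{\rm H}^{\rm M}=\tfrac43+2/m$.

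The main obstacle is the arithmetic for $q^{\rm B}$: unlike MERW, the NBCRW eigenvalue problem (\ref{S8}) mixes the adjacency and degree matrices, and I must solve the induced $3\times3$ system for $(x_{\rm H},x_{\rm I},x_{\rm P})$ carefully to pin down $\kappa$ and the centrality ratio. Because this computation is already carried out in the proof of Theorem~\ref{T1}, however, I can quote it and keep the present proof focused on the one-line first-step recursion and the averaging step.
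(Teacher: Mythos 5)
Your approach is the same as the paper's: a first-step (one-step recursion) analysis exploiting the symmetry classes hub/internal/peripheral, with the transition probability from an internal node taken from the centrality ratios computed in the proofs of Theorems~\ref{T1} and \ref{T2}, followed by averaging $T_{\rm H}=\tfrac{1}{3}(2T_{\rm I}+T_{\rm P})$. The reduction to the single parameter $q$ is a clean way to package the paper's three case analyses, and your final answers and the ratios $x_{\rm P}/x_{\rm H}=\sqrt{2m-1}/m$, $\mu_{\rm P}/\mu_{\rm H}=1/m$ are all correct. One arithmetic slip must be fixed: from $T_{\rm I}=\frac{1+q}{1-q}$ and $T_{\rm P}=\frac{2}{1-q}$ the master formula is
\begin{equation*}
T_{\rm H}=\frac{2T_{\rm I}+T_{\rm P}}{3}=\frac{4+2q}{3(1-q)},
\end{equation*}
not $\frac{4}{3(1-q)}$; the latter gives $8/3$ for TURW rather than $10/3$. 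The three values you quote ($10/3$, $\tfrac43+2\sqrt{2m-1}/m$, $\tfrac43+2/m$) are exactly what $\frac{4+2q}{3(1-q)}$ produces, so only the displayed formula needs correcting. Two further cosmetic errors that do not affect the argument: the leading eigenvalue of $\mathbf{M}$ is $\kappa=(2m-1)^{1/4}$ (you wrote $\sqrt{2m-1}$, which is $\kappa^2$), and $\lambda_1=\sqrt{2m+2}$ rather than $\sqrt{m+1}+1/\sqrt{m+1}$; in both cases the eigenvector ratio you actually use is the correct one.
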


The proof is presented in \ref{app-3}

Theorem~\ref{T3} shows that for TURW, NBCRW and MERW in large rose graph $\mathcal{R}_m^4$ ($N_m\to\infty$), the partial mean hitting time to the hub node tends to constants, with $T_{\rm H}^{\rm B}=T_{\rm H}^{\rm M}=4/3$ smaller than $T_{\rm H}^{\rm T}=10/3$. Although for NBCRW and MERW in large rose graphs $\mathcal{R}_m^4$, $T_{\rm H}^{\rm B}$ and $T_{\rm H}^{\rm M}$ are asymptotically equal, their global mean hitting times follow different behaviors, as can be seen from the following theorem.

\begin{theorem}\label{T4}
The global mean hitting times for TURW, NBCRW and MERW in $\mathcal{R}_m^4$ with $N_m=3m+1$ nodes are
\begin{equation}\label{E4}
\langle T\rangle^{\rm T}=\frac{20m(3m-1)}{3(3m+1)}=\frac{20(N_m-1)(N_m-2)}{9N_m},
\end{equation}
\begin{align}\label{E5}
\langle T\rangle^{\rm B}&=\frac{2m^3+12m^2-14m+4}{(3m+1)\sqrt{2m-1}}+\frac{36m^2-8m}{3(3m+1)}\nonumber\\
&=\frac{2N_m^2+30N_m-192}{9\sqrt{6N_m-15}}+\frac{268+20\sqrt{6N_m-15}}{9N_m\sqrt{6N_m-15}}\nonumber\\
&\quad+\frac{12N_m-32}{9}
\end{align}
and
\begin{align}\label{E6}
\langle T\rangle^{\rm M}&=\frac{6m^3+36m^2+10m-12}{9m+3}\nonumber\\
&=\frac{2N_m^3+30N_m^2-36N_m-104}{27N_m},
\end{align}
respectively.
\end{theorem}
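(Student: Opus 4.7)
The plan is to exploit the high symmetry of $\mathcal{R}_m^4$ --- its automorphism group permutes the $m$ petals and swaps the two internal nodes within each petal --- to diagonalize the relevant matrices in closed form, and then apply the spectral formulas~\eqref{B33},~\eqref{L2}, and~\eqref{B53}. Under this symmetry the $N_m$ nodes split into three orbits (the hub, the $2m$ internal nodes, and the $m$ peripheral nodes), so the partial mean hitting time to any target depends only on the orbit of the target and $\langle T\rangle = \frac{1}{N_m}(T_{\rm H} + 2m\,T_{\rm I} + m\,T_{\rm P})$.

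For TURW and NBCRW, formulas~\eqref{B33} and~\eqref{L2} depend on the Laplacian spectrum only through $\sum_{k\geq 2} 1/\sigma_k$ (respectively $1/\theta_k$), so no eigenvector information is needed. I would diagonalize ${\bf L}$ (respectively the weighted Laplacian of $\mathcal{W}$) by splitting eigenvectors into orbit-constant modes --- which reduce to a $3\times 3$ problem giving the zero eigenvalue and two roots of a quadratic in $m$ --- and modes vanishing at the hub, which on each petal reduce to a $3$-vertex subproblem yielding eigenvalues $2$ and $2\pm\sqrt{2}$ with total multiplicities $m$, $m-1$, and $m-1$ respectively. Summing the reciprocals and substituting into~\eqref{B33} yields~\eqref{E4}. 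The NBCRW computation runs along identical lines: the non-backtracking centralities $x_i$ from the proof of Theorem~\ref{T1} are constant on each orbit, so ${\bf X}$ respects the orbit decomposition and ${\bf W}={\bf X}{\bf A}{\bf X}$ has only two distinct edge-weight values (hub--internal versus internal--peripheral); the leading eigenvalue $\kappa$ of ${\bf B}$ enters these weights and ultimately injects the $\sqrt{2m-1}$ factor visible in~\eqref{E5}. Combining $\sum 1/\theta_k$ with the total strength $s$ via~\eqref{L2} then gives~\eqref{E5}.

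For MERW, formula~\eqref{B53} mixes eigenvalues with eigenvector components through $1/\psi_{1j}^2$ and cross-sums $\sum_i \psi_{ki}/\psi_{1i}$, which is awkward to evaluate directly. I would bypass it using the explicit principal eigenvector $\psi_1$ from the proof of Theorem~\ref{T2}: a short calculation shows the MERW transition probabilities simplify to $p_{\rm HI}=1/(2m)$ (uniform out of the hub), $p_{\rm IH}=m/(m+1)$, $p_{\rm IP}=1/(m+1)$, and $p_{\rm PI}=1/2$. This reduces MERW on $\mathcal{R}_m^4$ to a reversible chain whose orbit-reduced dynamics are governed by a small linear system of first-step-analysis equations indexed by source--target orbit pairs. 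Solving this system produces closed-form expressions for $T_{\rm I}^{\rm M}$ and $T_{\rm P}^{\rm M}$, which combined with $T_{\rm H}^{\rm M}$ from Theorem~\ref{T3} and the orbit average yield~\eqref{E6}.

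The main obstacle is the NBCRW case. The weighted Laplacian of $\mathcal{W}$ has orbit-dependent diagonal entries and two distinct edge-weight values, so its $3\times 3$ symmetric block and its per-petal hub-vanishing subproblem are appreciably more intricate than for TURW. Extracting $\sum 1/\theta_k$ in the compact form displayed in~\eqref{E5} requires careful tracking of the $\sqrt{2m-1}$ factors through the quadratic arising in the $3\times 3$ block and then consolidating them with the rational contributions from the hub-vanishing eigenvalues; this algebraic bookkeeping is where the bulk of the proof of Theorem~\ref{T4} will live.
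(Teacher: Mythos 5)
Your plan is correct, but for two of the three walks it takes a genuinely different route from the paper. The paper never invokes the spectral formulas \eqref{B33}, \eqref{L2}, or \eqref{B53} in proving Theorem~\ref{T4}: for all three walks it writes the total hitting time $T_{\rm tot}^{\rm Z}$ as a linear combination of seven orbit-pair hitting times $T_{\rm X\to\rm Y}^{\rm Z}$ (splitting node pairs into same-petal and different-petal pairs, the latter decomposing additively through the hub because the hub is a cut vertex), and then solves the first-step-analysis recursions for those seven quantities, reusing $T_{\rm I\to\rm H}^{\rm Z}$ and $T_{\rm P\to\rm H}^{\rm Z}$ from the proof of Theorem~\ref{T3}. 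Your MERW argument is exactly this. Your TURW and NBCRW arguments instead diagonalize the (weighted) Laplacian via the orbit/isotypic decomposition and feed $\sum_{k\geq 2}1/\sigma_k$ (resp.\ $\sum_{k\geq 2}1/\theta_k$) into \eqref{B33} (resp.\ \eqref{L2}); your claimed spectrum of ${\bf L}$ checks out (eigenvalue $2$ with multiplicity $m$, eigenvalues $2\pm\sqrt{2}$ each with multiplicity $m-1$, plus $0$ and the two roots of $\sigma^2-(2m+4)\sigma+6m+2=0$) and it does reproduce \eqref{E4}, and the same equitable-partition decomposition goes through for the weighted Laplacian of $\mathcal{W}$ with the orbit-constant centralities from \eqref{S25}. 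The spectral route is shorter for TURW, since no eigenvector data is needed, at the price of heavier algebra for NBCRW; the paper's uniform first-step treatment also delivers all the individual orbit-pair hitting times as a byproduct. One practical caveat that affects only your route: the paper's stated edge count $E_m=6m+2$ is a slip (the degree sum $2m\cdot 1+2\cdot 2m+2\cdot m=8m$ gives $E_m=4m$), and \eqref{B33} needs $2E=8m$ to yield \eqref{E4}; the paper's recursion-based proof never uses $E_m$ and is unaffected, so be sure to use the correct edge count when you carry out the spectral computation.
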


The proof is presented in \ref{app-4}.

Theorem \ref{T3} provides succinct dependence relations of $\langle T\rangle^{\rm T}$, $\langle T\rangle^{\rm B}$ and $\langle T\rangle^{\rm M}$ on the network size $N_m$, from which we can find that for large networks (i.e. $N_m\to\infty$), the leading terms for global mean hitting times for TURW, NBERW, and MERW are $\langle T\rangle^{\rm T}\sim N_m$, $\langle T\rangle^{\rm B}\sim \left(N_m\right)^{\frac{3}{2}}$ and $\langle T\rangle^{\rm M}\sim \left(N_m\right)^{2}$, respectively. Thus, $\langle T\rangle^{\rm T}$, $\langle T\rangle^{\rm B}$ and $\langle T\rangle^{\rm M}$ behave differently for the three random walks on $\mathcal{R}_m^4$. For TURW, $\langle T\rangle^{\rm T}$ grows linearly with $N_m$; while for both NBCRW and MERW, $\langle T\rangle^{\rm B}$ and $\langle T\rangle^{\rm M}$ increase superlinearly with $N_m$, with
$\langle T\rangle^{\rm B}$ smaller than $\langle T\rangle^{\rm M}$. These results indicate that, when searching a node distributed uniformly in $\mathcal{R}_m^4$, TURW is the most efficient, while MERW is the most inefficient, as observed in the model and real networks studied in the previous section.



\section{Conclusion}~\label{Sec5}

The application effects of random walks are determined to a large extent by the properties and behaviors of stationary distribution and hitting times. Recent works indicate that biased random walks 
perform better in multiple applications than TURW. Thus, designing appropriate biased random walks and understanding their properties are of significant  importance. In this paper, we defined a new biased random walk, NBCRW, with the bias dependent on the non-backtracking centrality, which is a recently proposed node centrality measure having several advantages over traditional eigenvector centrality metric. We established a theoretical framework for computing quickly the transition probabilities, stationary distribution, and hitting times of NBCRW on a general network.



Within our proposed framework, we studied numerically or analytically NBCRW on some model and realistic networks, and compared the results about stationary distribution and hitting times with those corresponding to TURW and MERW, the latter of which is actually a biased random walk towards selecting neighboring having high eigenvector centrality. We found that for homogeneous networks, the behaviors for stationary distribution and hitting times of the three random walks resemble to each other. However, for heterogeneous networks, there is a big difference in the behaviors of the three random walks. For example, the stationary distribution of NBCRW outperforms TURW and MERW in discriminating nodes, in particular those with identical degree. With respect to hitting times, a walker finds the hub node most quickly when performing MERW, and detects a uniformly selected target most rapidly when executing TURW. For both cases, the hitting times of NBCRW interpolates between TURW and MERW.

In view of the distinctive behaviors of NBCRW, in future it is interesting to explore the applications of NBCRW in different fields, such as community detection, visual saliency, link prediction, and so on.


\section*{Acknowledgements}
This work was supported by the National Natural Science Foundation of China under Grants No. 11275049.

\appendix

\section{Proof of Theorem \ref{T1}}\label{app-1}

	Let $\kappa_1$ denote the leading eigenvalue of matrix ${\bf M}$ corresponding to $\mathcal{R}_m^4$. 
	Let $M_m(\kappa)$ be the characteristic polynomial of matrix ${\bf M}$, i.e.,
	\begin{equation}\label{S15}
	M_m(\kappa)=\det{(\kappa{\bf I}-{\bf M})}.
	\end{equation}
	Then, $\kappa_1$ is the largest root of equation $M_m(\kappa)=0$. Equation~(\ref{S15}) can be recast as
	\begin{align}\label{S16}
	M_m(\kappa)&=\det \left(\begin{array}{cc}\kappa{\bf I}-{\bf A} & {\bf D}-{\bf I}\\
	-{\bf I} & \kappa{\bf I}\end{array}\right)\nonumber\\
	&=\frac{1}{\kappa^{N_m}}\cdot \det \left(\begin{array}{cc} \kappa^2{\bf I}-\kappa{\bf A}-{\bf I}+{\bf D} & {\bf I}-{\bf D}\\
	{\bf O} & \kappa{\bf I}\end{array}\right)\nonumber\\
	&=\det \left((\kappa^2-1){\bf I}+{\bf D}-\kappa{\bf A}\right)\,,
	\end{align}
	which reduces the computation of $M_m(\kappa)$ to computing a determinant of a new matrix ${\bf P}_m=(\kappa^2-1){\bf I}+{\bf D}-\kappa{\bf A}$ of low order. According to the construction of $\mathcal{R}_m^4$, $\det({\bf P}_m)$ can be rephrased in the following form
	\begin{equation}\label{S17}
	\det({\bf P}_m)=\det \left(\begin{array}{ccccc} \kappa^2+2m-1 & -\kappa{\bf e} & -\kappa{\bf e} & \cdots & -\kappa{\bf e}\\
	-\kappa{\bf e}^{\top} & {\bf Q} & {\bf O} & \cdots & {\bf O}\\
	-\kappa{\bf e}^{\top} & {\bf O} & {\bf Q} & \cdots & {\bf O}\\
	\vdots & \vdots & \vdots & \quad & \vdots\\
	-\kappa{\bf e}^{\top} & {\bf O} & {\bf O} & \cdots & {\bf Q}\end{array}\right),
	\end{equation}
	where ${\bf e}=(1, 1, 0)$, $\bf O$ is the $3 \times 3$ zero matrix, and $\bf Q$ is a $3 \times 3$ matrix given by
	\begin{equation}\label{S18}
	{\bf Q}=\left(\begin{array}{ccc} \kappa^2+1 & 0 & -\kappa \\
	0 & \kappa^2+1 & -\kappa\\
	-\kappa & -\kappa & \kappa^2+1\end{array}\right).
	\end{equation}
	Note that the first row of matrix ${\bf P}_m$ on the right-hand side (rhs) of (\ref{S17}) can be regarded as the sum (i.e., a linear combination with all scalars being 1) of the following $m+1$ vectors: $(\kappa^2+2m-1, {\bf 0}, {\bf 0}, \cdots, {\bf 0})$, $(0, -\kappa{\bf e}, {\bf 0}, \cdots, {\bf 0})$, $(0, {\bf 0}, -\kappa{\bf e}, \cdots, {\bf 0})$, $\cdots$, $(0, {\bf 0}, {\bf 0}, \cdots, -\kappa{\bf e})$, where ${\bf 0}$ represents the zero vector $(0, 0, 0)$. According to the properties of determinants, 
	$\det({\bf P}_m)$ can be rewritten as
	\begin{align}\label{S19}
	\det({\bf P}_m)&=\det \left(\begin{array}{ccccc}\kappa^2+2m-1 & {\bf 0} & {\bf 0} & \cdots & {\bf 0} \\
	-\kappa{\bf e}^{\top} & {\bf Q} & {\bf O} & \cdots & {\bf O} \\
	-\kappa{\bf e}^{\top} & {\bf O} & {\bf Q} & \cdots & {\bf O} \\
	\vdots & \vdots & \vdots & \quad & \vdots \\
	-\kappa{\bf e}^{\top} & {\bf O} & {\bf O} & \cdots & {\bf Q}\end{array}\right )\nonumber\\
	&\quad+m\cdot\det \left(\begin{array}{ccccc}0 & -\kappa{\bf e} & {\bf 0} & \cdots & {\bf 0} \\
	-\kappa{\bf e}^{\top} & {\bf Q} & {\bf O} & \cdots & {\bf O} \\
	-\kappa{\bf e}^{\top} & {\bf O} & {\bf Q} & \cdots & {\bf O} \\
	\vdots & \vdots & \vdots & \quad & \vdots \\
	-\kappa{\bf e}^{\top} & {\bf O} & {\bf O} & \cdots & {\bf Q}\end{array}\right)\nonumber\\
	&=(\kappa^2+2m-1)(\det{\bf Q})^m+\nonumber\\
	&\quad m(\det{\bf Q})^{m-1}\det \left(\begin{array}{cc}0 & -\kappa{\bf e}\\ -\kappa{\bf e}^{\top} & {\bf Q}\end{array}\right).
	\end{align}
	Based on \eqref{S18}, we obtain
	\begin{equation}\label{S20}
	\det{(\bf Q)}=\kappa^6+\kappa^4+\kappa^2+1
	\end{equation}
	and
	\begin{equation}\label{S21}
	\det \left(\begin{array}{cc}0 & -\kappa{\bf e}\\ -\kappa{\bf e}^{\top} & {\bf Q}\end{array}\right)=-2\kappa^6-4\kappa^4-2\kappa^2.
	\end{equation}
	Inserting \eqref{S20} and \eqref{S21} into \eqref{S19} yields
	\begin{align}\label{S22}
	M_m(\kappa)&=\det({\bf P}_m)=(\kappa+1)(\kappa-1)(\kappa^2+1)\nonumber\\
	&\quad(\kappa^4-2m+1)(\kappa^6+\kappa^4+\kappa^2+1)^{m-1}.
	\end{align}
	Thus, the largest eigenvalue $\kappa_1$ of matrix ${\bf M}$ is
	\begin{equation}\label{S23}
	\kappa_1=(2m-1)^{\frac{1}{4}}.
	\end{equation}
	
	Next, we continue to derive the eigenvector of unit length corresponding to eigenvalue $\kappa_1$. Let $x_{\rm H}$, $x_{\rm I}$ and $x_{\rm P}$ represent separately the non-backtracking centrality for the hub node, an internal node  and a peripheral node in graph $\mathcal{R}_m^4$.  
	According to \eqref{S8}, $x_{\rm H}$, $x_{\rm I}$ and $x_{\rm P}$ satisfy the following system of equations:
	\begin{equation}\label{S24}
	\left\{
	\begin{array}{c}
	2m x_{\rm I}-(2m-1)\frac{x_{\rm H}}{\kappa}=\kappa x_{\rm H},\\
	x_{\rm H}+x_{\rm P}-\frac{x_{\rm I}}{\kappa}=\kappa x_{\rm I},\\
	x_{\rm H}^2+\left(\frac{x_{\rm H}}{\kappa}\right)^2+2m\left[x_{\rm I}^2+\left(\frac{x_{\rm I}}{\kappa}\right)^2\right]+m\left[x_{\rm P}^2+\left(\frac{x_{\rm P}}{\kappa}\right)^2\right]=1,\nonumber\\
	\end{array}
	\right.
	\end{equation}
	which can be resolved to yield
	\begin{equation}\label{S25}
	\left\{
	\begin{array}{c}
	x_{\rm H}=\frac{2\sqrt{m}\kappa^3}{\sqrt{(\kappa^2+1)(\kappa^8+2\kappa^6+2(8m-3)\kappa^4+2(2m-1)^2\kappa^2+(2m-1)^2)}},\\
	x_{\rm I}=\frac{\kappa^2(\kappa^2+2m-1)}{\sqrt{m(\kappa^2+1)(\kappa^8+2\kappa^6+2(8m-3)\kappa^4+2(2m-1)^2\kappa^2+(2m-1)^2)}},\\
	x_{\rm P}=\frac{\kappa(\kappa^4+2m-1)}{\sqrt{m(\kappa^2+1)(\kappa^8+2\kappa^6+2(8m-3)\kappa^4+2(2m-1)^2\kappa^2+(2m-1)^2)}}.\\
	\end{array}
	\right.
	\end{equation}
	
	Then the normalized factor $Q$ can be computed by
	\begin{align}\label{S26}
	Q&=\sum_{i=1}^{N_m}\left(\frac{\kappa^2-1}{\kappa}+\frac{d_i}{\kappa}\right)x_i^2\nonumber\\
	&=\left(\frac{\kappa^2-1}{\kappa}+\frac{2m}{\kappa}\right)x_{\rm H}^2+2m\left(\frac{\kappa^2-1}{\kappa}+\frac{2}{\kappa}\right)x_{\rm I}\nonumber\\
	&\quad+m\left(\frac{\kappa^2-1}{\kappa}+\frac{2}{\kappa}\right)x_{\rm P}\nonumber\\
	&=\frac{2(2m-1)^{\frac{1}{4}}[2m^2+m-1+\sqrt{2m-1}(3m-1)]}{(\sqrt{2m-1}+1)[m(5+\sqrt{2m-1})-2]}.
	\end{align}
	Plugging (\ref{S23}), (\ref{S25}) and (\ref{S26}) into (\ref{B9}) and considering the relation $N_m=3m+1$, the theorem follows. 

\section{Proof of Theorem \ref{T2}}\label{app-2}

	Let $\lambda_1$ be the leading eigenvalue of adjacency matrix ${\bf A}$ for graph $\mathcal{R}_m^4$. And let $\mu_{\rm H}$, $\mu_{\rm I}$ and $\mu_{\rm P}$ be the elements of the leading eigenvector of unit length corresponding to the hub node, an internal node and a peripheral node, respectively. Then, 
	\begin{equation}\label{S31}
	\left\{
	\begin{array}{ccc}
	2m\mu_{\rm I}&=&\lambda_1\mu_{\rm H},\\
	\mu_{\rm H}+\mu_{\rm P}&=&\lambda_1\mu_{\rm I},\\
	2\mu_{\rm I}&=&\lambda_1\mu_{\rm P}.\\
	\end{array}
	\right.
	\end{equation}
	Doing some simple algebra operations on (\ref{S31}), we have
	\begin{equation}\label{S32}
	\left\{
	\begin{array}{ccc}
	\frac{2m\mu_{\rm I}}{\mu_{\rm H}+\mu_{\rm P}}=\frac{\mu_{\rm H}}{\mu_{\rm I}},\\
	\frac{\mu_{\rm H}+\mu_{\rm P}}{2\mu_{\rm I}}=\frac{\mu_{\rm I}}{\mu_{\rm P}},\\
	\end{array}
	\right.
	\end{equation}
	which, together with the normalization condition $\mu_{\rm H}^2+2m\mu_{\rm I}^2+2\mu_{\rm P}^2=1$, is solved to yield
	\begin{equation}\label{S33}
	\left\{
	\begin{array}{ccc}
	\mu_{\rm H}&=&\frac{\sqrt{m}}{\sqrt{2m+2}},\\
	\mu_{\rm I}&=&\frac{1}{2\sqrt{m}},\\
	\mu_{\rm P}&=&\frac{1}{\sqrt{m}\sqrt{2m+2}}.\\
	\end{array}
	\right.
	\end{equation}
	Combining \eqref{S33} and the relation $N_m=3m+1$, the theorem follows from \eqref{B5}. 


\section{Proof of Theorem \ref{T3}}\label{app-3}

	For TURW (MERW, NBCRW) in $\mathcal{R}_m^4$, let $T_{\rm I\to \rm H}^{\rm T}$ ($T_{\rm I\to \rm H}^{\rm M}$, $T_{\rm I\to \rm H}^{\rm B}$) be the hitting time from an internal node to the hub and, and let $T_{\rm P\to \rm H}^{\rm T}$ ($T_{\rm P\to \rm H}^{\rm M}$, $T_{\rm P\to \rm H}^{\rm B}$) be the hitting time from a peripheral node to the hub. Then, by definition, for the three random walks in $\mathcal{R}_m^4$ the partial mean hitting time to the hub can be computed in a uniform formula as
	\begin{equation}\label{S37}
	T_{\rm H}^{\rm Z}=\frac{1}{3}(2T_{\rm I\to\rm H}^{\rm Z}+T_{\rm P\to\rm H}^{\rm Z}),
	\end{equation}
	where ${\rm Z}$ can be ${\rm T}$, ${\rm M}$, or ${\rm B}$. We next determine the partial mean hitting time to the hub for the three considered random walks.
	
	\emph{Case I: TURW}. Since TURW is unbiased, the quantities $T_{\rm I\to\rm H}^{\rm T}$ and $T_{\rm P\to\rm H}^{\rm T}$ satisfy the following relations:
	\begin{equation}\label{S38}
	T_{\rm I\to\rm H}^{\rm T}=\frac{1}{2}+\frac{1}{2}(1+T_{\rm P\to\rm H}^{\rm T}) \nonumber
	\end{equation}
	and
	\begin{equation}\label{S39}
	T_{\rm P\to\rm H}^{\rm T}=1+T_{\rm I\to\rm H}^{\rm T}, \nonumber
	\end{equation}
	from which we have
	\begin{equation}\label{S40}
	T_{\rm I\to\rm H}^{\rm T}=3
	\end{equation}
	and
	\begin{equation}\label{S41}
	T_{\rm P\to\rm H}^{\rm T}=4.
	\end{equation}
	Plugging (\ref{S40}) and (\ref{S41}) into (\ref{S37}) yields 
	\begin{equation}\label{S42}
	T_{\rm H}^{\rm T}=\frac{10}{3}.\nonumber
	\end{equation}
	
	\emph{Case II: NBCRW}. 
	We first determine the transition probabilities between different nodes for NBCRW in $\mathcal{R}_m^4$. If the walker is currently at a peripheral node, at next time step it will jump to either of the two internal nodes adjacent to it; 
	if the current location of the walker is an internal node, according to \eqref{B8} and \eqref{S25}, at next time step, the probability of the walker at the hub node or a peripheral node is $\frac{x_{\rm H}}{x_{\rm H}+x_{\rm P}}=\frac{m}{m+\sqrt{2m-1}}$ and  $\frac{x_{\rm P}}{x_{\rm H}+x_{\rm P}}=\frac{\sqrt{2m-1}}{m+\sqrt{2m-1}}$, respectively. Then, we can establish the relations between $T_{\rm I\to\rm H}^{\rm B}$ and $T_{\rm P\to\rm H}^{\rm B}$ as
	\begin{equation}\label{S43}
	T_{\rm I\to\rm H}^{\rm B}=\frac{m}{m+\sqrt{2m-1}}+\frac{\sqrt{2m-1}}{m+\sqrt{2m-1}}(1+T_{\rm P\to\rm H}^{\rm B})\nonumber
	\end{equation}
	and
	\begin{equation}\label{S44}
	T_{\rm P\to\rm H}^{\rm B}=1+T_{\rm I\to\rm H}^{\rm B},\nonumber
	\end{equation}
	which can be solved to yield
	\begin{equation}\label{S45}
	T_{\rm I\to\rm H}^{\rm B}=1+\frac{2\sqrt{2m-1}}{m}
	\end{equation}
	and
	\begin{equation}\label{S46}
	T_{\rm P\to\rm H}^{\rm B}=2+\frac{2\sqrt{2m-1}}{m}.\nonumber
	\end{equation}
	Thus, the mean hitting time to the hub for NBCRW in $\mathcal{R}_m^4$ is
	\begin{align}\label{S47}
	T_{\rm H}^{\rm B}&=\frac{1}{3}(2T_{\rm I\to\rm H}^{\rm B}+T_{\rm P\to\rm H}^{\rm B})\nonumber \\&=\frac{4}{3}+\frac{2\sqrt{2m-1}}{m}=\frac{4}{3}+\frac{2\sqrt{6N_m-15}}{N_m-1}.\nonumber
	\end{align}
	
	\emph{Case III: MERW}. 
	For MERW in $\mathcal{R}_m^4$, the transition probability from a node $i$ to one of its neighbor $j$ is $\mu_j/\sum_{k}a_{ik}\mu_k$. Then, according to (\ref{S33}), we obtain that the probabilities from an internal node to the hub node and the peripheral neighbor node are $\frac{\mu_{\rm H}}{\mu_{\rm H}+\mu_{\rm P}}=\frac{m}{m+1}$ and $\frac{\mu_{\rm P}}{\mu_{\rm H}+\mu_{\rm P}}=\frac{1}{m+1}$, respectively. Thus, we can establish the following relations for $T_{\rm I\to\rm H}^{\rm M}$ and $T_{\rm P\to \rm H}^{\rm M}$:
	\begin{equation}\label{S48}
	T_{\rm I\to\rm H}^{\rm M}=\frac{m}{m+1}+\frac{1}{m+1}(1+T_{\rm P\to \rm H}^{\rm M})
	\end{equation}
	and
	\begin{equation}\label{S49}
	T_{\rm P\to \rm H}^{\rm M}=1+T_{\rm I\to \rm H}^{\rm M}.
	\end{equation}
	Resolving (\ref{S48}) and (\ref{S49}), one obtains the analytical expressions for $T_{\rm I\to \rm H}^{\rm M}$ and $T_{\rm P\to \rm H}^{\rm M}$ as
	\begin{equation}\label{S50}
	T_{\rm I\to \rm H}^{\rm M}=\frac{m+2}{m}
	\end{equation}
	and
	\begin{equation}\label{S51}
	T_{\rm P\to \rm H}^{\rm M}=\frac{2(m+1)}{m}.
	\end{equation}
	According \eqref{S37}, (\ref{S50}) and (\ref{S51}), the mean hitting time to the hub for MERW in $\mathcal{R}_m^4$ is obtained to be
	\begin{equation}\label{S52}
	T_{\rm H}^{\rm M}=\frac{1}{3}(2T_{\rm I\to \rm H}^{\rm M}+T_{\rm P\to \rm H}^{\rm M})=\frac{4}{3}+\frac{2}{m}=\frac{4}{3}+\frac{6}{N_m-1}.\nonumber
	\end{equation}
	This completes the proof.

\section{Proof of Theorem \ref{T4}}\label{app-4}

	We use superscript ${\rm Z} \in \{ {\rm T}, {\rm M}, {\rm B} \}$ to differentiate related quantities for TURW, MERW, NBCRW on $\mathcal{R}_m^4$.  For example, $\langle T\rangle^{{\rm T}}$ $\left(\langle T\rangle^{{\rm M}}, \langle T\rangle^{{\rm B}}\right)$ presents global mean hitting time for TURW (MERW, NBCRW) on $\mathcal{R}_m^4$. By definition,
	\begin{equation}\label{S53}
	\langle T\rangle^{{\rm Z}}=\frac{1}{N_m(N_m-1)}\sum_{i=1}^{N_m}\sum_{j=1}^{N_m}T_{i\to j}^{{\rm Z}}=\frac{T_{\rm tot}^{\rm Z}}{N_m(N_m-1)},
	\end{equation}
	where $T_{i \to j}^{\rm Z}$ is the hitting time from node $i$ to node $j$ in $\mathcal{R}_m^4$,  and
	\begin{equation}\label{S54}
	T_{\rm tot}^{\rm Z}=\sum_{i=1}^{N_m}\sum_{j=1}^{N_m}T_{i \to j}^{\rm Z}\nonumber
	\end{equation}
	denotes the sum of  hitting times over all $N_m(N_m-1)$ node pairs in $\mathcal{R}_m^4$.
	
	Thus, in order to obtain $\langle T\rangle^{\rm Z}$, we only need to determine $T_{\rm tot}^{\rm Z}$. By construction, 
	$T_{\rm tot}^{\rm Z}$ can be decomposed into two terms as
	\begin{equation}\label{S56}
	T_{\rm tot}^{\rm Z}=m T_{\rm tot, 1 }^{\rm Z}+m(m-1)T_{\rm tot, 2 }^{\rm Z},
	\end{equation}
	where $T_{\rm tot, 1 }^{\rm Z}$ is the sum of hitting times between all pairs of nodes belonging to one of the $m$ petals, and $T_{\rm tot, 2 }^{\rm Z}$ is the sum of hitting times between  all pairs of nodes in different petals. We now compute $T_{\rm tot, 1 }^{\rm Z}$ and $T_{\rm tot, 2 }^{\rm Z}$.
	
	For $T_{\rm tot, 1 }^{\rm Z}$, it can be evaluated by
	\begin{align}\label{S57}
	T_{\rm tot, 1 }^{\rm Z}&=2T_{\rm I\to\rm H}^{\rm Z}+T_{\rm P\to\rm H}^{\rm Z}+2(T_{\rm H\to\rm I}^{\rm Z}+T_{\rm I\to\rm I}^{\rm Z}+T_{\rm P\to\rm I}^{\rm Z})\nonumber\\
	&\quad+T_{\rm H\to\rm P}^{\rm Z}+2T_{\rm I\to\rm P}^{\rm Z},
	\end{align}
	where $T_{\rm X\to\rm Y}^{\rm Z}$ represents the hitting time from a node in class $X$ to another node in class $Y$ with both nodes belonging to the same petal. For instance, $T_{\rm I\to\rm P}^{\rm Z}$ is the hitting time from an internal node to the  peripheral node in the same petal, and $T_{\rm I\to\rm I}^{\rm Z}$ is the hitting time from one internal node to the other internal node in the same petal. 
	For $T_{\rm tot, 2}^{\rm Z}$, we have
	\begin{align}\label{S58}
	T_{\rm tot, 2}^{\rm Z}&=2(3T_{\rm I\to\rm H}^{\rm Z}+2T_{\rm H\to\rm I}^{\rm Z}+T_{\rm H\to\rm P}^{\rm Z})\nonumber\\
	&\quad+(3T_{\rm P\to\rm H}^{\rm Z}+2T_{\rm H\to\rm I}^{\rm Z}+T_{\rm H\to\rm P}^{\rm Z}).
	\end{align}
	Plugging (\ref{S57}) and (\ref{S58}) into (\ref{S56}) yields
	\begin{align}\label{S59}
	T_{\rm tot}^{\rm Z}&=(6m^2-4m)T_{\rm I\to\rm H}^{\rm Z}+(3m^2-2m)T_{\rm P\to\rm H}^{\rm Z}\nonumber\\
	&\quad+(6m^2-4m)T_{\rm H\to\rm I}^{\rm Z}+2mT_{\rm I\to\rm I}^{\rm Z}+2mT_{\rm P\to\rm I}^{\rm Z}\nonumber\\
	&\quad+(3m^2-2m)T_{\rm H\to\rm P}^{\rm Z}+2mT_{\rm I\to\rm P}^{\rm Z}.
	\end{align}
	We are now ready to determine the global mean hitting time for TURW, NBCRW and MERW in $\mathcal{R}_m^4$ by evaluating those quantities on the rhs of (\ref{S59}).
	
	\emph{Case I: TURW.} Since the quantities $T_{\rm I\to\rm H}^{\rm T}$ and $T_{\rm P\to\rm H}^{\rm T}$  have been obtained earlier, we only require to determine $T_{\rm H\to\rm I}^{\rm T}$, $T_{\rm I\to\rm I}^{\rm T}$, $T_{\rm P\to\rm I}^{\rm T}$, $T_{\rm H\to\rm P}^{\rm T}$ and $T_{\rm I\to\rm P}^{\rm T}$, which obey the following relations:
	\begin{equation}\label{S60}
	T_{\rm H\to \rm I}^{\rm T}=\frac{m-1}{ m}(1+T_{\rm I\to\rm H}^{\rm T}+T_{\rm H\to\rm I}^{\rm T})+\frac{1}{2m}(1+T_{\rm I\to\rm I}^{\rm T})+\frac{1}{2m}, \nonumber
	\end{equation}
	\begin{equation}\label{S61}
	T_{\rm I\to\rm I}^{\rm T}=\frac{1}{2}(1+T_{\rm H\to\rm I}^{\rm T})+\frac{1}{2}(1+T_{\rm P\to\rm I}^{\rm T}), \nonumber
	\end{equation}
	\begin{equation}\label{S62}
	T_{\rm P\to\rm I}^{\rm T}=\frac{1}{2}(1+T_{\rm I\to\rm I}^{\rm T})+\frac{1}{2},\nonumber
	\end{equation}
	\begin{equation}\label{S63}
	T_{\rm H\to\rm P}^{\rm T}=\frac{m-1}{m}(1+T_{\rm I\to \rm H}^{\rm T}+T_{\rm H\to\rm P}^{\rm T})+\frac{1}{m}(1+T_{\rm I\to\rm P}^{\rm T}\nonumber
	\end{equation}
	and
	\begin{equation}\label{S64}
	T_{\rm I\to\rm P}^{\rm T}=\frac{1}{2}(1+T_{\rm H\to\rm P}^{\rm T})+\frac{1}{2}.\nonumber
	\end{equation}
	Using (\ref{S40}), the above equations are solved to obtain
	\begin{equation}\label{S65}
	T_{\rm H\to\rm I}^{\rm T}=6m-3,
	\end{equation}
	\begin{equation}\label{S66}
	T_{\rm I\to\rm I}^{\rm T}=4m,
	\end{equation}
	\begin{equation}\label{S67}
	T_{\rm P\to\rm I}^{\rm T}=2m+1,
	\end{equation}
	\begin{equation}\label{S68}
	T_{\rm H\to\rm P}^{\rm T}=8m-4
	\end{equation}
	and
	\begin{equation}\label{S69}
	T_{\rm I\to\rm P}^{\rm T}=4m-1.
	\end{equation}
	Plugging (\ref{S40})-(\ref{S41}) and (\ref{S65})-(\ref{S69}) into (\ref{S59}) and \eqref{S53}, we obtain the explicit expression for the global mean hitting time $\langle T\rangle^{\rm T}$ for TURW in $\mathcal{R}_m^4$ and its relation between node number $N_m=3m+1$, as given by \eqref{E4}.
	
	\emph{Case II: NBCRW}. 
	For NBCRW on $\mathcal{R}_m^4$, we can establish the following recursive relations for the quantities $T_{\rm H\to\rm I}^{\rm B}$, $T_{\rm I\to\rm I}^{\rm B}$, $T_{\rm P\to\rm I}^{\rm B}$, $T_{\rm H\to\rm P}^{\rm B}$ and $T_{\rm I\to\rm P}^{\rm B}$:
	\begin{equation}\label{S71}
	T_{\rm H\to\rm I}^{\rm B}=\frac{m-1}{ m}(1+T_{\rm I\to\rm H}^{\rm B}+T_{\rm H\to\rm I}^{\rm B})+\frac{1}{2m}(1+T_{\rm I\to\rm I}^{\rm B})+\frac{1}{2m},\nonumber
	\end{equation}
	\begin{equation}\label{S72}
	T_{\rm I\to\rm I}^{\rm B}=\frac{x_{\rm H}}{x_{\rm H}+x_{\rm P}}(1+T_{\rm H\to\rm I}^{\rm B})+\frac{x_{\rm P}}{x_{\rm H}+x_{\rm P}}(1+T_{\rm P\to\rm I}^{\rm B}),\nonumber
	\end{equation}
	\begin{equation}\label{S73}
	T_{\rm P\to\rm I}^{\rm B}=\frac{1}{2}(1+T_{\rm I\to\rm I}^{\rm B})+\frac{1}{2},\nonumber
	\end{equation}
	\begin{equation}\label{S74}
	T_{\rm H\to\rm P}^{\rm B}=\frac{m-1}{m}(1+T_{\rm I\to\rm H}^{\rm B}+T_{\rm H\to\rm P}^{\rm B})+\frac{1}{m}(1+T_{\rm I\to\rm P}^{\rm B})\nonumber
	\end{equation}
	and
	\begin{equation}\label{S75}
	T_{\rm I\to\rm P}^{\rm B}=\frac{x_{\rm H}}{x_{\rm H}+x_{\rm P}}(1+T_{\rm H\to\rm P}^{\rm B})+\frac{x_{\rm P}}{x_{\rm H}+x_{\rm P}}.\nonumber
	\end{equation}
	Considering (\ref{S25}) and (\ref{S45}), 
	the above equations are resolved to obtain
	\begin{equation}\label{S76}
	T_{\rm H\to\rm I}^{\rm B}=4m+2\sqrt{2m-1}-1-\frac{2\sqrt{2m-1}}{m},
	\end{equation}
	\begin{equation}\label{S77}
	T_{\rm I\to\rm I}^{\rm B}=4m,
	\end{equation}
	\begin{equation}\label{S78}
	T_{\rm P\to\rm I}^{\rm B}=2m+1,
	\end{equation}
	\begin{small}
		\begin{equation}\label{S79}
		T_{\rm H\to\rm P}^{\rm B}=\frac{4m^2\sqrt{2m-1}+2m^3+4m^2-2m\sqrt{2m-1}-6m+2}{m\sqrt{2m-1}}
		\end{equation}
	\end{small}
	and
	\begin{equation}\label{S80}
	T_{\rm I\to\rm P}^{\rm B}=\frac{2m^2}{\sqrt{2m-1}}+2m-1.
	\end{equation}
	Substituting~\eqref{S45}, \eqref{S46}, and (\ref{S76})-(\ref{S80}) into (\ref{S59}) and \eqref{S53} yields \eqref{E5}.
	
	\emph{Case III: MERW}. 
	For MERW on $\mathcal{R}_m^4$, we can also build some relations among related hitting times:
	\begin{equation}\label{S82}
	T_{\rm H\to\rm I}^{\rm M}=\frac{m-1}{m}(1+T_{\rm I\to\rm H}^{\rm M}+T_{\rm H\to\rm I}^{\rm H})+\frac{1}{2m}(1+T_{\rm I\to\rm I}^{\rm M})+\frac{1}{2m},\nonumber
	\end{equation}
	\begin{equation}\label{S83}
	T_{\rm I\to\rm I}^{\rm M}=\frac{\mu_{\rm H}}{\mu_{\rm H}+\mu_{\rm P}}(1+T_{\rm H\to\rm I}^{\rm M})+\frac{\mu_{\rm P}}{\mu_{\rm H}+\mu_{\rm P}}(1+T_{\rm P\to\rm I}^{\rm M}),\nonumber
	\end{equation}
	\begin{equation}\label{S84}
	T_{\rm P\to\rm I}^{\rm M}=\frac{1}{2}(1+T_{\rm I\to\rm I}^{\rm M})+\frac{1}{2},\nonumber
	\end{equation}
	\begin{equation}\label{S85}
	T_{\rm H\to\rm P}^{\rm M}=\frac{m-1}{m}(1+T_{\rm I\to\rm H}^{\rm M}+T_{\rm H\to\rm P}^{\rm M})+\frac{1}{m}(1+T_{\rm I\to\rm P}^{\rm M})\nonumber
	\end{equation}
	and
	\begin{equation}\label{S86}
	T_{\rm I\to\rm P}^{\rm M}=\frac{\mu_{\rm H}}{\mu_{\rm H}+\mu_{\rm P}}(1+T_{\rm H\to\rm P}^{\rm M})+\frac{\mu_{\rm P}}{\mu_{\rm H}+\mu_{\rm P}}.\nonumber
	\end{equation}
	Making use of \eqref{S33} and  \eqref{S50}, 
	the above equations are solved to give
	\begin{equation}\label{S87}
	T_{\rm H\to\rm I}^{\rm M}=4m+1-\frac{2}{m},\nonumber
	\end{equation}
	\begin{equation}\label{S88}
	T_{\rm I\to\rm I}^{\rm M}=4m,\nonumber
	\end{equation}
	\begin{equation}\label{S89}
	T_{\rm P\to\rm I}^{\rm M}=2m+1,\nonumber
	\end{equation}
	\begin{equation}\label{S90}
	T_{\rm H\to\rm P}^{\rm M}=\frac{2(m+1)}{m}(m^2+m-1)\nonumber
	\end{equation}
	and
	\begin{equation}\label{S91}
	T_{\rm I\to\rm P}^{\rm M}=2m(m+1)-1.\nonumber
	\end{equation}
	Plugging the above-obtained results into (\ref{S59}) and \eqref{S53} results in \eqref{E5}.
	
	This completes the proof of the theorem.



\end{document}